\documentclass[peerreview,a4paper]{IEEEtran}
\usepackage[utf8]{inputenc}

\renewcommand{\baselinestretch}{0.99}

\usepackage[T1]{fontenc}
\usepackage{amssymb}
\usepackage{amsmath}
\usepackage{amsthm}
\usepackage{manfnt}

\usepackage{soul}
\usepackage{bbm}
\usepackage{graphicx}
\usepackage{subfiles}
\usepackage[dvipsnames]{xcolor}
\usepackage{comment}
\usepackage{dsfont}
\usepackage{epsfig,cite}
\usepackage{float}
\usepackage{tikz}
\usepackage[caption=false]{subfig}
\usepackage{nicefrac}
\usetikzlibrary{positioning}
\usetikzlibrary{arrows}
\tikzstyle{block}=[draw opacity=0.7,line width=1.4cm]
\usetikzlibrary{positioning,shapes,shadows,arrows,scopes,decorations}
\usetikzlibrary{matrix,chains,shapes.geometric,shapes.arrows,automata,chains,er,fit}
\tikzstyle{comment}=[rectangle, draw=black, fill=red!50!black, rounded corners, drop shadow,
anchor=west, text=white]
\usetikzlibrary{shapes.callouts}
\usetikzlibrary{mindmap}
\usetikzlibrary{calc,through,backgrounds}
\usepackage{ifthen}
\usepackage{hyperref}
\newboolean{OMMITISIT}
\setboolean{OMMITISIT}{true}

\newtheorem{theorem}{Theorem}
\newtheorem{corollary}{Corollary}
\newtheorem{lemma}{Lemma}

\newtheorem{definition}{Definition}
\newtheorem{remark}{Remark}

\newtheorem{example}{Example}

\newcommand{\Var}{\mathsf{Var}}
\newcommand{\E}{\mathbb{E}}
\newcommand{\sign}{sign}
\newcommand{\eps}{\varepsilon}

\newcommand{\Gfunc}{d}

\newcommand{\fM}{f}

\usepackage[square,numbers]{natbib}

\title{Maximal correlation under cardinality constraints}

\usepackage{authblk}
\author[1]{Dror Drach, Tomer Berg, Or Ordentlich, Ofer Shayevitz}
\graphicspath{ {./} }

\def\biblio{\bibliographystyle{IEEEtran}\bibliography{references}}

\begin{document}

\def\biblio{}

\maketitle

\begin{abstract}
 In this paper, we define and analyze the quantized maximal correlation, an extension of the notion of maximal correlation restricted to functions taking values in sets of bounded cardinality. We derive an upper bound on the quantized maximal correlation by showing that the correlation between any quantized functions of $X$ and $Y$ is related to the  MMSE distortion in quantization of a particular linear combination of random variables. Following this, we leverage rate-distortion techniques and anti-concentration inequalities to further bound this MMSE, which results in explicit bounds on the quantized maximal correlation. Unlike the quantized maximal correlation itself, which does not generally tensorize, our bounds on the mean squared error do tensorize, resulting in a dimension-free upper bound on the quantized maximal correlation for product distributions. Our results also lead to improved bounds on the isoperimetric constants of reversible Markov chains and product chains, strengthening classical results such as those by Alon and Milman.
\end{abstract}

% \textcolor{red}{
% To Do
%     \item \ofer{May want to add a reference and a short discussion (maybe as a footnote) of the paper "Ricci curvature of finite Markov chains via convexity of the entropy", https://arxiv.org/pdf/1612.00514.pdf, which gives another lower bound on the isoperimetric constant as a function of a square root of the spectral gap and the smallest transition rate (for reversible chains with a nonnegative entropic Ricci curvature). We may want to cite this paper as well https://arxiv.org/pdf/1111.2687.pdf .  I am not sure if our example chain has nonnegative Ricci, but even if it does their bound is most likely still worse than ours, seems it's good only around $\lambda_1\approx 0$.}
% \end{itemize}
% }

%%%%%%%%%%%%%%%%%%%%%%%%%%%%%%%%%%%%%
%          A PROPER SKELETON        %
%%%%%%%%%%%%%%%%%%%%%%%%%%%%%%%%%%%%%
\section{Introduction}   
Let $(X,Y)~\sim P_{XY}$ be two correlated random variables, and let $f(X)$ and $g(Y)$ denote the outputs of an $M$ and $N$ level quantizers operating on $X$ and $Y$ respectively. We are interested in characterizing the maximal correlation that can be attained between any quantizers outputs $f(X)$ and $g(Y)$. To motivate our work, let us consider the case of $M=N=2$, which is directly related to the fundamental problem of analyzing the probability of disagreement between Boolean functions of dependent random variables~\cite{Witsenhausen_Main,borell1985geometric,am85,o2014analysis}. Here one is often interested in lower-bounding $\Pr(f(X)  \neq  g(Y))$ in terms of the correlation between $f(X)$ and $g(Y)$. It is easy to see that for any $f:\mathcal{X}\to \{0,1\}$ and $g:\mathcal{Y}\to \{0,1\}$ such that $p=\E [f(X)]$ and $q=\E [g(Y)]$, it holds that 
\begin{align}
    \Pr(f(X)  \neq  g(Y))& \geq 2\sqrt{p(1-p)q(1-q)}(1-\rho(f,g)), \label{eq:AgreementProbWitsBound}
\end{align}
where $\rho(f,g)$ is the Pearson correlation between $f(X)$ and $g(Y)$. To obtain a uniform bound, it is natural to define the \textit{binary maximal correlation} $\rho_b(X;Y) $ between $X$ and $Y$, as the maximum Pearson correlation achievable by any pair of zero-mean, unit-variance binary valued functions of $X$ and $Y$. That is,
\begin{align}
    &\rho_b(X;Y) \triangleq \sup_{f,g} \E[ f(X) g(Y)]\label{eq:binbin}
\end{align}
where the maximization is over all $f, g$ taking exactly two distinct values, subject to $\mathbb{E}[f(X)] =\mathbb{E}[g(Y)] = 0$ and $\mathbb{E}[f^2(X)]= \mathbb{E}[g^2(Y)] =1$. By construction, $\rho(f(X),g(Y)) \le \rho_b(X;Y)$ for any Boolean $f,g$, so substituting $\rho_b$ into eq.~\eqref{eq:AgreementProbWitsBound} yields the universal lower bound 
\begin{align}
    \Pr(f(X)  \neq  g(Y))& \geq 2\sqrt{p(1-p)q(1-q)}(1-\rho_b(f,g)). \label{eq:rhoProbWitsBound}
\end{align}%This reduction of a functional inequality to a property of the random variables $(X,Y)$ itself has proved useful in a range of settings, for example, in combinatorics and theoretical computer science, inequalities of this type underpin analysis of Boolean functions and influence of variables~\cite{o2014analysis}.

Despite its significance, binary maximal correlation $\rho_b(X;Y)$ is notoriously challenging to compute and analyze in general due to the finite cardinality constraint since, as the alphabet sizes of $X$ or $Y$ grow, searching over all Boolean functions becomes intractable. Moreover, a counterexample given by Bradley~\cite{bradley2016cousin} shows that binary maximal correlation does not tensorize under product distributions. Hence computing its value for high-dimensional functions over i.i.d. pairs, a task that is often interesting, is generally infeasible. These difficulties motivate the introduction of the so called \textit{Hirschfeld–Gebelein–Rényi maximal correlation}, denoted $\rho_m(X;Y)$. The maximal correlation removes the binary restriction, allowing $f$ and $g$ to be any zero-mean, unit-variance functions, potentially taking a continuum of values. First introduced by Hirschfeld and Gebelein~\cite{hirschfeld1935connection} and later formalized by Rényi~\cite{renyi1959measures}, $\rho_m(X;Y)$ captures the largest possible correlation between any functions of $X$ and $Y$. A remarkable fact, proved via functional analysis techniques by Lancaster~\cite{LancasterComplete1958}, is that $\rho_m(X;Y)$ is equal to the second singular value in a canonical spectral decomposition of the joint distribution $P_{XY}$ (see Section~\ref{subsec:spectralDecomp}). For the case of discrete $X$ and $Y$, the maximal correlation is given by the second singular value in the singular value decomposition of the so called \textsf{DTM} matrix, $$\textsf{DTM}=\text{diag}(P_X^{-1/2})\cdot P_{XY}\cdot\text{diag}(P_Y^{-1/2}),$$ with the first singular value being 1, corresponding to the trivial constant functions. Consequently, $\rho_m(X;Y)$ is much easier to compute than $\rho_b(X;Y)$ in the scalar case. Even more important is the tensorization property proved in~\cite{Witsenhausen_Main}, which states that for i.i.d. pairs, $\rho_m(X^n;Y^n) = \rho_m(X;Y)$. These properties make the maximal correlation a reasonable substitute for $\rho_b$ for product distributions $\prod_{i=1}^nP_{X_iY_i}$, leading to a tensorized form of the bound in~\eqref{eq:AgreementProbWitsBound},
\begin{align}
    \Pr(f(X^n) \neq g(Y^n))& \geq 2\sqrt{p(1-p)q(1-q)}(1-\rho_m(X,Y)).\label{eq:Witsen3}
\end{align}
This is however a weakening of the disagreement inequality for Boolean functions. Indeed, while $\rho_m(X,Y)=1$ implies $\rho_b(X,Y)=1$ and a zero disagreement probability for Boolean functions, for general $\rho_m(X,Y)<1$ the bound can be loose.

Beyond its direct definition, binary maximal correlation is closely connected to several classical problems. One notable connection is to isoperimetric inequalities on discrete spaces. Indeed, if one considers the special case $f=g$, then $\Pr(f(X)\neq f(Y))$ in Witsenhausen’s setting becomes the probability that $f$ takes different values on two correlated copies of $X$. In the context of Markov chains, this quantity is precisely related to the edge expansion (\textit{Cheeger constant}) of the chain~\cite{cheeger1969lower,PeresLevin2017markov}. For example, if $W$ is a reversible Markov transition kernel on state space $\mathcal{X}$ with stationary distribution $\mu$, the isoperimetric (Cheeger) constant $h(W)$ can be written as\footnote{In the literature, the Cheeger constant is usually defined as $h(W)/2$.} 
\begin{align}
  h(W) = \inf_{f:\mathcal{X}\to \{0,1\}} \frac{\Pr(f(X)\neq f(Y))}{\min\{\Pr(f(X)=0),\Pr(f(X)=1)\}}  
\end{align}
with $(X,Y)\sim \mu\times W$. Witsenhausen’s inequality~\eqref{eq:rhoProbWitsBound}  then implies a lower bound on $h(W)$ in terms of the binary maximal correlation of $(X,Y)$. Specifically, noting that $\min\{p,1-p\} \leq 2p(1-p)$, we have
\begin{align}
    h(W) &\geq  1-\rho_b(X;Y), \label{eq:h_bin_bound}
\end{align} 
which in fact can be shown to be tight up to a multiplicative factor of $2$ in some cases, e.g., when $W$ is a lazy reversible kernel and $\mu$ is uniform~\cite{PeresLevin2017markov}.
A powerful application of eq.~\eqref{eq:Witsen3} is the celebrated isoperimetric inequality $h(W^n) \geq n^{-1}(1-\rho_m(X;Y))$, where $W^n$ is the $n$-fold Cartesian product of $W$, which is a form of the famous inequality of Alon and Milman for product spaces~\cite{am85}.

In this work, we derive new bounds on the binary maximal correlation which we extend to arbitrary quantization levels. 
To that end, we define the $(M,N)$-quantized maximal correlation $\rho_{M,N}(X;Y)$ as the maximum of $\mathbb{E}[f(X)g(Y)]$ over functions $f,g$ taking at most $M$ and $N$ distinct values respectively.  We show that the inner product between any two functions $f(X)$ and $g(Y)$ can be upper bounded by their inner product on the maximal correlation subspace (see definition of $\mathcal{H}_{f_2}^n$ in~\ref{subsec:productspace}) and the orthogonal subspace, which in turn implies a bound that depends on the second and third singular values in the canonical spectral decomposition of the joint distribution, and on the projection of any $M$-valued function of $X^n$ (resp. $N$-valued function of $Y^n$) onto the space spanned by the maximal correlation attaining functions. This projection is shown to be related to the minimum mean square error in $M$-level quantization of a unit norm linear combination of random variables. Intuitively, this implies that if $\rho_m(X;Y)$ is significantly larger than $\rho_{M,N}(X;Y)$, then any attempt to quantize the optimal correlating functions into $M$ and $N$ levels must incur a large MSE loss. 

This result allows us to derive upper bounds on quantized maximal correlation using bounds on MSE distortion of source coding / quantization problems. We formalize this through two approaches: the first one appeals to classical rate distortion techniques where the input is a linear combination of i.i.d. random variables, the rate is $R=\log M$ bits and the distortion is quadratic. The second approach leverages results due to Petrov~\cite{petrov2012sums} and Esséen~\cite{esseen1968concentration} on the concentration functions of sums of random variables. In particular, building on the Kolmogorov–Rogozin inequality and its refinements~\cite{kolmogorov1956two,rogozin1961increase}, we show that large anti-concentration of linear combination of random variables implies a large distortion in $M$-level quantization of said combination.
Finally, we obtain improved isoperimetric inequalities for Markov Chains and product Markov chains. In particular, for a reversible Markov chain with transition kernel $W$, we improve the Cheeger constant $h(W)$ lower bounds obtained in the classic results of Alon and Milman~\cite{am85} for graphs and the bounds of Houdré and Tetali~\cite{houdre2004isoperimetric} for Markov chains. %Likewise, the improved isoperimetric bounds have direct implications for Markov chain mixing times via Cheeger’s inequality; since the Cheeger constant provides a handle on the spectral gap and thus mixing rate, our results suggest faster mixing (or smaller bottlenecks) in certain chains. 
%The information-theoretic viewpoint we employ for maximal correlation ties in with the study of contraction coefficients in information theory. 
%Recent works by Makur et al.~\cite{makur2019information} have highlighted how such contraction coefficients can be used to characterize limits of non-interactive simulation, functional inequalities, and more. Our findings add to this narrative by providing concrete bounds and relationships that could be useful in those settings. 
%Finally, we note that maximal correlation has seen renewed interest in fields like graphical models and machine learning as a measure of nonlinear association that can detect dependencies, for instance, maximal correlation has been used as a criterion for feature selection and fairness in machine learning models (see, e.g.,~\cite{huang2019universal,lee2022maximal,gavel2022maximum,huang2024universal}). We hope that the improved understanding and bounds for quantized maximal correlation developed here may translate into more effective algorithms for those applications (for example, controlling the loss of dependency information when data is discretized or quantized for privacy or fairness).

The paper is organized as follows. In Section \ref{sec:prelim}, we provide background on maximal correlation, including formal definitions of the  quantized maximal correlation and a review of the canonical spectral decomposition. Section \ref{sec:upper} develops the core upper bound on quantized maximal correlation, and in Section \ref{sec:gaussian} we use this result to obtain an improved upper bound on the binary maximal correlation of jointly Gaussian random variables. In Section \ref{sec:ratedist} and Section \ref{sec:anticont} we derive lower bounds on MSE distortion in quantization of normalized linear combinations of random variables using information theoretic rate distortion tools and anti-concentration techniques respectively. In Section \ref{sec:examples} we evaluate the bounds obtained in previous sections for different distributions and show that different approaches (i.e., rate distortion, anti concentration) give tighter bounds in different scenarios. Finally, in Section \ref{sec:apps} we discuss applications of our results, and particularly obtain an improved lower bound on the isoperimetric (Cheeger) constant of reversible Markov chains.

\section{Preliminaries}\label{sec:prelim}

\subsection{Notations and definitions}\label{subsec:notations}
Throughout the paper we consider (either discrete, continuous or mixed) jointly distributed i.i.d. random variables $(X^n,Y^n) \sim P_{XY}^{\otimes n}$ with marginals $P_X$ and $P_Y$, respectively. We let $f,g$ represent (usually finite-valued) functions that operate on random variables. We let the inner product $\langle f(X),g(Y)\rangle$ denote the cross-correlation between $f(X)$ and $g(Y)$, that is, $\langle f(X),g(Y)\rangle=\E[f(X)g(Y)]$. Define $\rho_m(X;Y)$ as the maximal correlation between $X$ and $Y$, where $\rho_b(X;Y)$ is the maximal correlation between any binary function of $X$ and $Y$. Let $H(X)=-\sum_{i=1}^mp_i\log p_i$ be the Shannon entropy of a discrete r.v. $X$ with pmf $p$ supported on $[m]$, and $h(Y)=-\int_{\mathbb{R}}f(y)\log f(y)dy$ be the differential entropy of a continuous r.v. $Y$ with pdf $f(y)$ supported on $\mathbb{R}$. The standard Gaussian probability density function and cumulative distribution function are denoted by $\varphi(x) = \frac{1}{\sqrt{2\pi}}e^{-\frac{x^2}{2}}$ and $\Phi(x) = \int_{-\infty}^x \varphi(x)$ respectively. Finally, we let $\mathcal{F}_M^n(P_X)$ denote the family of all zero mean, unit norm, $M$-level functions of $X^n$, that is, we say that $f\in\mathcal{F}_M^n(P_X)$ if $\fM:\mathcal{X}^n\to \{c_1,\ldots,c_M\}$ for some real values $\{c_1,\ldots,c_M\}$, and also $\E (\fM(X^n)) = 0,\E (\fM^2(X^n)) = 1$. For brevity, we let $\mathcal{F}_M^1(P_X)=\mathcal{F}_M(P_X)$.

\begin{definition}\label{def:quant_maxcorr}
 The \textit{$(M,N)$-quantized maximal correlation} is defined as 
 \begin{align*}
    \rho_{M,N}(X;Y) & \triangleq \sup_{f\in\mathcal{F}_M^n(P_X),g\in\mathcal{F}_N^n(P_Y)} \E [f(X)g(Y)].
\end{align*} 
When $M=N$ we simply write $\rho_{M}(X;Y)$. Note that $\rho_2(X;Y) = \rho_b(X;Y)$, and $ \rho_{\infty}(X;Y)=\rho_m(X;Y)$. We also write $\rho_{\infty,N}(X;Y)$ for the \textit{one-sided quantized maximal correlation}, i.e., where there is no restriction on the cardinality of $X$.
\end{definition}
%%%%%%%%%%%%%%%%%%%%%%%%%%%%%%%%%%%%%%
%    Spectral theory  a.k.a DTM      %
%%%%%%%%%%%%%%%%%%%%%%%%%%%%%%%%%%%%%%
\subsection{Spectral decomposition of joint distributions}\label{subsec:spectralDecomp}
It was shown in~\cite{LancasterComplete1958} (see also \cite{makur2019information}) that under some mild regularity conditions there are (possibly countably infinite) orthonormal sets of singular functions $\{f_i:\mathcal{X}\to \mathbb{R}\}_{i\geq 1}$ and $\{g_j:\mathcal{Y}\to \mathbb{R}\}_{j\geq 1}$ spanning $L^2(P_X)$ and $L^2(P_Y)$ respectively, satisfying\footnote{These orthonormal sets exist for any joint distribution with finite $\chi^2$-information. In particular, this always holds for discrete alphabets, in which case the expansion corresponds to the standard singular value decomposition of the DTM matrix whose entries are $P_{XY}(x,y)/\sqrt{P_X(x)P_Y(y)}$~\cite{Witsenhausen_Main}.} 
\begin{align*}
    \langle f_i(X),f_j(X)\rangle&=\langle g_i(Y),g_j(Y)\rangle=\mathds{1}(i=j)\\ \langle f_i(X),g_j(Y)\rangle&=\sigma_i \mathds{1}(i=j)
\end{align*}
for nonnegative singular values $1=\sigma_1 \geq \sigma_2 \geq \cdots $, where $f_1 = g_1 = 1$ and $\sigma_2 = \rho_m(X;Y)$. For finite alphabets, $\sigma_i = 0$ for any $i > \min \{|\mathcal{X}|,|\mathcal{Y}|\}$, and $f_{i} =g_{j} = 0$ for any $i>|\mathcal{X}|$ and $j > |\mathcal{Y}|$. We refer to the set $\{f_i, g_i, \sigma_i\}_{i\geq 1}$ as the \textit{canonical system} of $P_{XY}$.

\section{A general upper Bound on Quantized Maximal Correlation }\label{sec:upper}

\subsection{Duality}

For a closed subspace $H \subseteq L^2(\pi)$ and function $f \in L^2(\pi)$, we write $f_H$ for the orthogonal projection of $f$ onto $H$, i.e., the unique element of $H$ satisfying $\langle f- f_H, h\rangle = 0$ for all $h \in H$. This can be construed as the  element of $H$ closest to $f$. We write $H^c$ for the orthogonal complement of $H$ and $f_{H^c} = f - f_H$.

As convention, all inner products and norms are taken in $L^2(\pi)$, i.e., $\langle f, g \rangle = \E_\pi \left[f(X) g(X) \right]$, and $\mathcal{F}_M=\mathcal{F}_M(\pi)$ denotes the family of zero mean, unit norm, $M$-level functions under $\pi$.

\begin{definition}[Optimal $M$-level MMSE of a subspace]\label{def:MMSEopt}
Let $H\subseteq L^2(\pi)$ be a nonzero closed subspace. The optimal MMSE in $M$-level quantization of $H$ is
\begin{align}
    D_M(H)\triangleq \inf_{\substack{\fM:\mathcal{X}\to \{c_1,\ldots,c_M\} \\ h\in H:\|h\|_2=1}}\E\left(f(X)-h(X)\right)^2. 
\end{align}
\end{definition}

\begin{definition}[Maximal projection onto a subspace]\label{def:maxproj}
The maximal projection of an $M$-valued function onto $H$ is\footnote{Note that, since $||f_H||_2 = \sup_{\substack{h\in H\\ ||h||_2=1}} \langle f, h \rangle$, we have
$$\phi_M(H) = \sup_{f\in\mathcal{F}_M} \sup_{\substack{h\in H\\ ||h||_2=1}} \langle f, h \rangle.$$} 
\begin{align}
  \phi_M(H) \triangleq \sup_{f\in\mathcal{F}_M} ||f_H||_2. 
\end{align}
\end{definition}

\begin{theorem}\label{thm:ProjQuantDuality}[Projection-quantization duality]
For any nonzero closed subspace $H\subseteq L^2(\pi)$ orthogonal to the constant functions,
$$(\phi_M(H))^2 + D_M(H) = 1.$$
\end{theorem}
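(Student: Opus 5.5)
We prove the two inequalities $D_M(H)\le 1-\phi_M(H)^2$ and $D_M(H)\ge 1-\phi_M(H)^2$ separately. Both rest on one computation. Fix $h\in H$ with $\|h\|_2=1$ and an $M$-level function $f$, and optimize over scalings of $f$. The set of $M$-level functions is closed under multiplication by scalars $a\in\mathbb{R}$, since $af$ takes values in $\{ac_1,\ldots,ac_M\}$. Moreover, because $h\perp 1$, subtracting the mean of $f$ only helps:
\[
\E(f-h)^2=\E(f-\E f-h)^2+(\E f)^2 ,
\]
and $f-\E f$ is again $M$-level. So in computing $D_M(H)$ we may restrict to zero-mean $f$ and to scalings $a\tilde f$ with $\tilde f\in\mathcal{F}_M$ (the constant case $f\equiv 0$ gives error $1$ and is covered by $a=0$). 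For such a pair,
\[
\min_{a\in\mathbb{R}}\E(a\tilde f-h)^2=\min_a\bigl(a^2-2a\langle \tilde f,h\rangle+1\bigr)=1-\langle \tilde f,h\rangle^2 .
\]

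For the lower bound on $D_M(H)$, take any $M$-level $f$ and unit $h\in H$. By the reduction above,
\[
\E(f-h)^2\ge 1-\langle \tilde f,h\rangle^2\ge 1-\|\tilde f_H\|_2^2\ge 1-\phi_M(H)^2 ,
\]
using Cauchy–Schwarz in the form $\langle \tilde f,h\rangle=\langle \tilde f_H,h\rangle\le\|\tilde f_H\|_2$. Taking the infimum gives $D_M(H)\ge 1-\phi_M(H)^2$.

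For the upper bound, take $\tilde f\in\mathcal{F}_M$ with $\tilde f_H\neq 0$ and set $h=\tilde f_H/\|\tilde f_H\|_2$, so that $\langle \tilde f,h\rangle=\|\tilde f_H\|_2$. With $a=\|\tilde f_H\|_2$, the pair $(a\tilde f,h)$ is admissible and gives error $1-\|\tilde f_H\|_2^2$. If $\tilde f_H=0$ for every $\tilde f\in\mathcal{F}_M$, then $\phi_M(H)=0$; in that case take $f\equiv 0$ and any unit $h\in H$, which exists because $H\neq\{0\}$, giving error $1$. In both cases, taking the supremum over $\tilde f$ yields $D_M(H)\le 1-\phi_M(H)^2$.

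The main subtlety is the closure bookkeeping: making sure that scaling and centering keep us inside the class of $M$-level functions (the values need not be distinct), and that $h\perp 1$ is exactly what makes centering harmless. Degenerate cases, such as $\mathcal{F}_M$ being empty when $\pi$ is a point mass, also need a brief check. The rest is the one-variable quadratic minimization above.
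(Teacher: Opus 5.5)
Your proof is correct and follows essentially the same route as the paper. You write an $M$-level function as its mean plus a scaled element of $\mathcal{F}_M$, use $h\perp 1$ to discard the mean, and minimize the resulting quadratic to get $1-\langle\tilde f,h\rangle^2$. You then finish with Cauchy--Schwarz, with equality at $h=\tilde f_H/\|\tilde f_H\|_2$. Your version is slightly more careful than the paper's in two respects: you minimize the scale over all of $\mathbb{R}$, so you need no appeal to sign-flip closure of $H$, and you explicitly handle constant quantizers and the case $\tilde f_H=0$, which the paper passes over.
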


\begin{remark}
    Both $\phi_M(H)$ and $D_M(H)$ are determined by the pair $(\pi, H)$ alone. In particular, when a joint law enters the picture in the next subsection, it will do so only through the choice of a subspace $H$ for each of its marginals.
\end{remark}

\begin{proof}
Any (non-constant) $M$-valued function $Q$ admits the canonical decomposition $Q = \mu + \sigma\tilde f$ with $\mu \triangleq \mathbb{E}[Q]$, $\sigma \triangleq \|Q - \mu\|_2$, $\tilde f \triangleq (Q-\mu)/\sigma \in \mathcal{F}_M$. Then, for any unit-norm $h \in H$,
$$\mathbb{E}[(Q - h)^2] = \mu^2 + \sigma^2 - 2\sigma\langle\tilde f, h\rangle + 1 = \mu^2 + (\sigma - \langle\tilde f, h\rangle)^2 + 1 - \langle\tilde f, h\rangle^2,$$
which over $(\mu, \sigma) \in \mathbb{R}\times\mathbb{R}_{\geq 0}$ attains $1 - \langle\tilde f, h\rangle^2$ at $\mu = 0$, $\sigma = \langle\tilde f, h\rangle$ (WLOG $\geq 0$ by sign-flip closure of $H$). Hence
$$D_M(H) = \inf_{Q, h}\mathbb{E}\left[(Q - h)^2\right] = \inf_{\tilde f \in \mathcal{F}_M,\, h}\bigl(1 - \langle\tilde f, h\rangle^2\bigr) = 1 - \sup_{\tilde f \in \mathcal{F}_M,\, h}\langle\tilde f, h\rangle^2.$$
It remains to evaluate this joint maximum. Writing $\tilde f = \tilde f_H + \tilde f_{H^c}$ and recalling that $h \perp H^c$, Cauchy–Schwarz gives
$$\langle\tilde f, h\rangle = \langle\tilde f_H, h\rangle \leq \|\tilde f_H\|,$$
with equality at $h = \tilde f_H/\|\tilde f_H\|$ when well-defined (both sides vanish otherwise). For each $\tilde f$ the optimal $h$ depends only on $\tilde f$ itself, so the joint maximum collapses to a single-variable one:
$$D_M(H) = 1 - \sup_{\tilde f \in \mathcal{F}_M,\, h}\langle\tilde f, h\rangle^2 = 1 - \sup_{\tilde f \in \mathcal{F}_M}\|\tilde f_H\|^2 = 1 - \left(\phi_M(H)\right)^2.$$
\end{proof}

\subsection{Spectral decomposition}\label{sec:spect}

Let $P_{XY}$ be a joint law with canonical system $\{f_i,g_i,\sigma_i\}_{i\geq 1}$ (Section~\ref{subsec:spectralDecomp}). Recall that $\{f_i\}_{i\geq 1}$ and $\{g_i\}_{i\geq 1}$ are orthonormal bases of $L^2(P_X)$ and $L^2(P_Y)$ respectively,
satisfying $\langle f_i(X),g_j(Y)\rangle=\sigma_i \mathds{1}(i=j)$
with $f_1=g_1=\boldsymbol{1}$ and  $1=\sigma_1 \geq \sigma_2 \geq \cdots $. For any $I \subseteq \{2, 3, \ldots \}$, we define the $I$-subsystem of $P_{XY}$ as the subset $\{f_i,g_i,\sigma_i\}_{i\in I}$ characterized by the subspaces 
\begin{align}
   H_{P_X}(I) &= \overline{\mathrm{span}}\{f_i\}_{i\in I} \\H_{P_Y}(I) &= \overline{\mathrm{span}}\{g_i\}_{i\in I}
\end{align}
and $\sigma(I)=\sup_{i\in I} \sigma_i$, with $\sup \emptyset = 0$.
Note that, expanding $f'\in H_{P_X}(I)$ and $g'\in H_{P_Y}(I)$ in the subsystem and using the bi-orthogonality of the canonical system,
\begin{align}
    \langle f', g'\rangle = \sum_{i\in I}\langle f', f_i\rangle\langle g', g_i\rangle\,\sigma_i \leq \sigma(I)\,||f'||_2\,||g'||_2,\label{eq:subsystem_corr}
\end{align}
and the bound is approached by the pairs $(f_i, g_i)$, $i\in I$, with $\sigma_i \to \sigma(I)$. We accordingly refer to $\sigma(I)$ as the \textit{maximal correlation of the subsystem}.
Similarly, the bi-orthogonality of the canonical system implies bi-orthogonality between matched subspaces of disjoint subsystems, i.e., for $f'\in H_{P_X}(I)$ and $g'\in H_{P_Y}(I')$ with $I \cap I' = \emptyset$, we have $\langle f', g'\rangle = 0$. Let $I^*$ denote the complement of $I$ within $\mathbb{N}\setminus 1$, and for brevity, let $f_I\triangleq f_{H_{P_X}(I)}$ and $g_I\triangleq g_{H_{P_Y}(I)}$ be the orthogonal projection of $f$ onto $H_{P_X}(I)$ and the orthogonal projection of $g$ onto $H_{P_Y}(I)$ respectively. In the following, we apply Cauchy Schwarz inequality to derive an upper bound on $\langle f,g\rangle $.

\begin{lemma}[Subsystem split]\label{Lemma:BasicFullPaper}
Let $I \subseteq \{2, 3, \ldots \}$. For any zero-mean, unit-norm functions $f\in L^2(P_X),g\in L^2 (P_Y)$,
\begin{align}
 \langle f,g\rangle  \leq  \sigma(I) \sqrt{ ||f_{I}||_2^2 \cdot||g_{I}||_2^2} + \sigma(I^*) \sqrt{(1-||f_{I}||_2^2)(1-||g_{I}||_2^2)}.\label{eq:SpectralLemmaTwoSubsystems}
\end{align}
\end{lemma}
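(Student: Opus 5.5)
Since $f$ and $g$ are zero mean, they are orthogonal to $f_1=g_1=\boldsymbol{1}$. They therefore lie in $H_{P_X}(\{2,3,\ldots\})$ and $H_{P_Y}(\{2,3,\ldots\})$, which are the closed spans of the remaining canonical functions. The plan is to split each function along the two complementary subsystems $I$ and $I^*$. We write
\[
f = f_I + f_{I^*}, \qquad g = g_I + g_{I^*},
\]
where $f_{I^*}$ is the projection onto $H_{P_X}(I^*)$, and similarly for $g_{I^*}$. Because $\{f_i\}_{i\geq 2}$ is an orthonormal basis of the zero-mean subspace and $I\cup I^*=\{2,3,\ldots\}$ is a disjoint union, these two pieces are orthogonal and sum to $f$. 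Pythagoras with $\|f\|_2=1$ then gives $\|f_{I^*}\|_2^2 = 1-\|f_I\|_2^2$, and likewise for $g$.

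Next we expand the inner product bilinearly:
\[
\langle f,g\rangle = \langle f_I,g_I\rangle + \langle f_I,g_{I^*}\rangle + \langle f_{I^*},g_I\rangle + \langle f_{I^*},g_{I^*}\rangle .
\]
The two cross terms vanish by the bi-orthogonality of matched subspaces of disjoint subsystems noted just before the lemma, since $I\cap I^*=\emptyset$. For the diagonal terms we apply \eqref{eq:subsystem_corr} to each subsystem separately. This gives $\langle f_I,g_I\rangle\le \sigma(I)\|f_I\|_2\|g_I\|_2$ and $\langle f_{I^*},g_{I^*}\rangle\le\sigma(I^*)\|f_{I^*}\|_2\|g_{I^*}\|_2$. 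Substituting the Pythagorean identities yields \eqref{eq:SpectralLemmaTwoSubsystems}.

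The main point to check is \eqref{eq:subsystem_corr} itself in the countably infinite case. We expand $f'=\sum_{i\in I}a_i f_i$ and $g'=\sum_{i\in I} b_i g_i$ with square-summable coefficients, and use $L^2$ continuity of the inner product to justify $\langle f',g'\rangle=\sum_i a_i b_i\sigma_i$. Then
\[
\sum_i a_i b_i \sigma_i \le \sigma(I)\sum_i |a_i||b_i| \le \sigma(I)\,\|a\|_2\|b\|_2
\]
by Cauchy–Schwarz on sequences. The same expansion gives the vanishing of the cross terms, since $\langle f_i,g_j\rangle=0$ for $i\ne j$. The case $I=\emptyset$, where $\sigma(\emptyset)=0$, is trivially covered by this argument.
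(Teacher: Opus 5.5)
Your proof is correct and matches the paper's argument: the cross terms vanish by bi-orthogonality of disjoint subsystems, each diagonal term is bounded via \eqref{eq:subsystem_corr}, and Pythagoras gives $\|f_{I^*}\|_2^2 = 1-\|f_I\|_2^2$ (and likewise for $g$). Your extra check of \eqref{eq:subsystem_corr} in the countably infinite case is a detail the paper leaves implicit; it rests on continuity of the bounded bilinear form $\E[f(X)g(Y)]$ and on $\sigma_i\geq 0$.
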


\begin{proof}
First note that, since $I \cap I^* = \emptyset$,
\begin{align}
    \langle f_I,g_{I^*}\rangle=\langle f_{I^*},g_I\rangle=0,
\end{align}
which implies
\begin{align}
    \langle f, g \rangle  = \langle f_{I},g_{I}\rangle + \langle f_{I^*},g_{I^*}\rangle .\label{LemmaFullProofA}
\end{align}
Applying eq.~\eqref{eq:subsystem_corr} to each term, we have $\langle f_I, g_I \rangle \leq \sigma(I)\, ||f_I||_2\, ||g_I||_2$ and $\langle f_{I^*}, g_{I^*} \rangle \leq \sigma(I^*)\, ||f_{I^*}||_2\, ||g_{I^*}||_2$. Finally, since $\{f_i\}_{i\geq 1}$ and $\{g_i\}_{i\geq 1}$ are orthonormal bases and $f, g$ are zero-mean with unit norms, we have $||f_{I^*}||_2^2 = 1 - ||f_{I}||_2^2$ and $||g_{I^*}||_2^2 = 1 - ||g_{I}||_2^2$. Substituting into~\eqref{LemmaFullProofA} yields~\eqref{eq:SpectralLemmaTwoSubsystems}, as desired.
\end{proof}

Motivated by the upper bound of Lemma~\ref{Lemma:BasicFullPaper}, we define, for parameters $s, t \geq 0$:
\begin{align}
    r(x,y) \triangleq  s \cdot x y + t \cdot \sqrt{(1-x^2)(1-y^2)}, \label{eq:d_ConvexWeightSingularValues}
\end{align}
and
\begin{align}
    d(a,b)=\sup_{\substack{0\leq x\leq a \\ 0\leq y\leq b}}r(x,y).\label{eq:dfunc}
\end{align}

\begin{definition}[Dominant subsystem]\label{def:subsys_dominance}
    An $I$-subsystem is called \textbf{dominant} if $I$ contains an index attaining $\sigma_2$.
\end{definition}
Note that since $I$ and $I^*$ partition the nontrivial indices, we have  $\max \{\sigma(I), \sigma(I^*) \}=\sigma_2$, so an $I$-subsystem is dominant iff $\sigma(I) \geq \sigma(I^*)$.

\begin{theorem}[Upper bound on quantized correlation]\label{thm:MainResultSubsystems}
For dominant $I$, $f\in \mathcal{F}_M(P_X)$ and $g\in \mathcal{F}_N(P_Y)$,
\begin{align}
    \langle f,g \rangle  \leq
    \Gfunc \left(\phi_M (H_{P_X}(I)), \phi_N(H_{P_Y}(I)) \right)= 
    \Gfunc \left(\sqrt{1-D_M(H_{P_X}(I))},\sqrt{1-D_N(H_{P_Y}(I))} \right), \label{eq:mainThmPartI}
\end{align}
where $\Gfunc$ is given by \eqref{eq:dfunc} with $(s,t)=(\sigma(I),\sigma(I^*))$, and consequently
\begin{align}
    \rho_{M,N}(X;Y) \leq \Gfunc  \left(\sqrt{1-D_M(H_{P_X}(I))},\sqrt{1-D_N(H_{P_Y}(I))} \right). \label{eq:thm1eq2}
\end{align}    
\end{theorem}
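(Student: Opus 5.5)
Set $x=\|f_I\|_2$ and $y=\|g_I\|_2$, where $f_I,g_I$ are the projections onto $H_{P_X}(I)$ and $H_{P_Y}(I)$. The plan is to combine Lemma~\ref{Lemma:BasicFullPaper} with the definitions of $\phi_M$ and $\Gfunc$. Since $f\in\mathcal{F}_M(P_X)$ and $g\in\mathcal{F}_N(P_Y)$ are zero-mean with unit norm, the lemma gives
\begin{align*}
\langle f,g\rangle \le \sigma(I)\,xy+\sigma(I^*)\sqrt{(1-x^2)(1-y^2)} = r(x,y),
\end{align*}
with $(s,t)=(\sigma(I),\sigma(I^*))$. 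By Definition~\ref{def:maxproj}, $0\le x\le \phi_M(H_{P_X}(I))$ and $0\le y\le \phi_N(H_{P_Y}(I))$. The pair $(x,y)$ is therefore feasible in the supremum~\eqref{eq:dfunc} defining $\Gfunc(a,b)$ with $a=\phi_M(H_{P_X}(I))$ and $b=\phi_N(H_{P_Y}(I))$. Hence $\langle f,g\rangle\le r(x,y)\le \Gfunc(a,b)$, which is the first inequality in~\eqref{eq:mainThmPartI}.

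The equality in~\eqref{eq:mainThmPartI} comes from Theorem~\ref{thm:ProjQuantDuality}, which gives $\phi_M(H)=\sqrt{1-D_M(H)}$. For it to apply, $H_{P_X}(I)$ and $H_{P_Y}(I)$ must be closed and orthogonal to constants. They are closed by definition as closed spans. They are orthogonal to constants because $I\subseteq\{2,3,\ldots\}$ and $f_1=g_1=\mathbf 1$ is orthogonal to every $f_i$, $i\ge2$. If $I=\emptyset$ the subspace is trivial, but dominance requires $I$ to contain an index attaining $\sigma_2$, so $I$ is nonempty. The bound~\eqref{eq:thm1eq2} then follows by taking the supremum over $f\in\mathcal{F}_M$ and $g\in\mathcal{F}_N$, since the right-hand side does not depend on $(f,g)$.

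Two side points need checking. First, Definition~\ref{def:quant_maxcorr} for $n$-fold products uses $\mathcal{F}^n_M$; I would treat the theorem for a generic pair $(X,Y)$, with $X^n,Y^n$ as a special case. Second, the argument above does not use dominance beyond ensuring $I\neq\emptyset$. Its real role is to make $\Gfunc$ a meaningful, nontrivial bound: when $s\ge t$, one can show $r$ is increasing along the diagonal direction, so constraining $x\le a$ and $y\le b$ actually bites. The inequality itself holds regardless, so I expect no real obstacle. The only delicate point is confirming the projection-norm bookkeeping $\|f_{I^*}\|^2=1-\|f_I\|^2$, and Lemma~\ref{Lemma:BasicFullPaper} already handles that.
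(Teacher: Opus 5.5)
Your proposal is correct and follows the paper's proof step for step. Lemma~\ref{Lemma:BasicFullPaper} gives $\langle f,g\rangle\le r(\|f_I\|_2,\|g_I\|_2)$, Definition~\ref{def:maxproj} puts this pair inside the rectangle over which $\Gfunc$ is taken, Theorem~\ref{thm:ProjQuantDuality} (admissible because dominance makes $I$ nonempty and $I$ excludes the trivial index) turns $\phi$ into $\sqrt{1-D}$, and a supremum over $(f,g)$ gives \eqref{eq:thm1eq2}; your side remark is also right that the ordering $\sigma(I)\ge\sigma(I^*)$, which the paper mentions here, is not needed for the inequality and only matters later in Lemma~\ref{lem:simplify_func}.
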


\begin{proof}
By Lemma~\ref{Lemma:BasicFullPaper} and Definition~\ref{def:maxproj}, the pair $(||f_I||_2, ||g_I||_2)$ satisfies $||f_I||_2 \leq \phi_M(H_{P_X}(I))$ and $||g_I||_2 \leq \phi_N(H_{P_Y}(I))$. Dominance gives $\sigma(I) \geq \sigma(I^*)$. Overall, Lemma~\ref{Lemma:BasicFullPaper} reads $\langle f, g \rangle \leq r\left( ||f_I||_2, ||g_I||_2 \right)$ with $(s,t) = (\sigma(I), \sigma(I^*))$. The definition of the supremum with $(a,b) = ( \phi_M(H_{P_X}(I)),  \phi_N(H_{P_Y}(I)))$ gives the inequality in~\eqref{eq:mainThmPartI}. The equality in~\eqref{eq:mainThmPartI} follows from Theorem~\ref{thm:ProjQuantDuality}, applied once with $\pi=P_X,\  H=H_{P_X}(I)$ and once with $\pi=P_Y,\  H=H_{P_Y}(I)$. Both applications are admissible since $I$ is dominant and excludes the trivial index, hence the subspaces are nonzero and orthogonal to the constant functions.
Finally, taking the supremum over $\mathcal{F}_M(P_X) \times \mathcal{F}_N(P_Y)$ and recalling Definition~\ref{def:quant_maxcorr} yields~\eqref{eq:thm1eq2}.
\end{proof}

To finalize the framework, the following technical lemma provides the explicit form of $d(a,b)$, and will prove helpful in the sequel, where we discuss concrete case studies. The derivation appears in the appendix.
\begin{lemma}\label{lem:simplify_func}
    Let $0\leq t\leq s$. We have that
\begin{align}\label{eq:SimplifyLemmaCases}
    d(a,b)=\begin{cases}
        \sqrt{b^2\cdot s^2+(1-b^2)\cdot t^2},& \{b> a\cap x^*\leq a \cap y^*\leq b\}\cup\{x^*\leq a\cap y^*> b\} , \\ \sqrt{a^2\cdot s^2+(1-a^2)\cdot t^2},& \{b\leq a\cap x^*\leq a\cap y^*\leq b\}\cup \{x^*> a\cap y^*\leq b\} ,\\ ab\cdot s+\sqrt{(1-a^2)(1-b^2)}\cdot t,& \text{o.w.},
        \end{cases}
\end{align}
where 
\begin{align}
   x^*=\frac{b\cdot s}{\sqrt{b^2\cdot s^2+(1-b^2)\cdot t^2}}, \hspace{2mm} y^*=\frac{a\cdot s}{\sqrt{a^2\cdot s^2+(1-a^2)\cdot t^2}} \label{eq:maximizer} .
\end{align}
\end{lemma}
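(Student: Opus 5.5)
We maximize $r(x,y)=sxy+t\sqrt{(1-x^2)(1-y^2)}$ over the box $[0,a]\times[0,b]$, with $0\le t\le s$ and $a,b\in[0,1]$. The approach is to find each coordinate's best response in closed form and then check which of three candidate points wins.

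\emph{Step 1: best responses.} Fix $y$. We parametrize $x=\cos\alpha$ and $y=\cos\beta$ with $\alpha,\beta\in[0,\pi/2]$. Then
\[
r=\cos\alpha\,(s\cos\beta)+\sin\alpha\,(t\sin\beta)=R(\beta)\cos(\alpha-\theta(\beta)),
\]
where $R(\beta)=\sqrt{s^2y^2+t^2(1-y^2)}$ and $\cos\theta=sy/R$. Hence, for fixed $y$, the map $x\mapsto r(x,y)$ is unimodal on $[0,1]$. Its unconstrained maximizer is
\[
x=\frac{sy}{\sqrt{s^2y^2+t^2(1-y^2)}},
\]
with maximum value $R$. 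The constrained maximizer is the minimum of this point and $a$. The same holds symmetrically in $y$.

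\emph{Step 2: monotonicity.} Since $s\ge t$, the unconstrained maximizer in $x$ satisfies $x\ge y$, and it is increasing in $y$. Also $R$ is increasing in $y$. The joint unconstrained maximum is therefore at $x=y=1$, with value $s$. So in the box we may assume the active constraint pushes one or both variables to the boundary.

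\emph{Step 3: candidates.} The maximum over the box is the largest of three values:
\begin{enumerate}
\item Set $y=b$ and take the best $x$. This gives value $\sqrt{b^2s^2+(1-b^2)t^2}$, provided the maximizer $x^*$ of \eqref{eq:maximizer} satisfies $x^*\le a$.
\item Symmetrically, set $x=a$ and take the best $y$. This gives $\sqrt{a^2s^2+(1-a^2)t^2}$, provided $y^*\le b$.
\item Take the corner $(a,b)$, with value $r(a,b)$.
\end{enumerate}
A case analysis then establishes the piecewise formula.

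\emph{Step 4: case analysis.} Both conditions $x^*\le a$ and $y^*\le b$ cannot hold unless the relevant constraint is slack. When both hold, the feasible interior candidate with the larger base wins. Since $\sqrt{c^2s^2+(1-c^2)t^2}$ is increasing in $c$, this is the $b$-formula if $b>a$ and the $a$-formula otherwise. When exactly one holds, only that formula is attainable. The other boundary's optimum is clipped to the corner, and the corner is dominated by the attainable one. When neither holds, both clipped optima coincide with the corner $(a,b)$. To make this rigorous we use the fact that at any maximizer in the box, each coordinate is the constrained best response to the other. By Step 2's monotonicity and unimodality, each coordinate is therefore either at its cap or at its unconstrained best response. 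It cannot be interior in both coordinates, since the only interior fixed point is $x=y=1$, and there $\cos(\alpha-\beta)$ forces $\alpha=\beta$ with $R=s$.

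The main obstacle is the bookkeeping of Step 4. In particular, we must verify that when $x^*\le a$ and $y^*\le b$ both hold, the comparison reduces exactly to $b>a$. We must also handle the degenerate case $t=s$, where ties occur. I would handle both with the angle representation: $r=s\cos\alpha\cos\beta+t\sin\alpha\sin\beta$, and the constraints become $\alpha\ge\alpha_0=\arccos a$ and $\beta\ge\beta_0=\arccos b$.
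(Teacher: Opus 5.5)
Your route is essentially the paper's. You reduce to the edges $y=b$ and $x=a$, where the one-dimensional maximizers are $x^*$ and $y^*$ with values $\sqrt{b^2s^2+(1-b^2)t^2}$ and $\sqrt{a^2s^2+(1-a^2)t^2}$. Otherwise you fall back to the corner $(a,b)$. Your angle form $r=R\cos(\alpha-\theta)$ gives the unimodality that the paper obtains by differentiating, and your best-response argument replaces its critical-point computation.

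The step that fails as written is the claim that the only point where both coordinates sit at their unconstrained best responses is $x=y=1$. For $t>0$ the point $(0,0)$ is also such a point: $r(x,0)=t\sqrt{1-x^2}$, and your own formula gives best response $0$ at $y=0$. For $s=t$, every diagonal point $(x,x)$ is one as well. So your enumeration of possible maximizers is incomplete. This is exactly the situation the paper handles separately when it dismisses the edges $x=0$ and $y=0$ via $r(c,c)\ge t$.

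The repair is short. By your Step 2, the unconstrained best response to $y$ is at least $y$. So any point with both coordinates at their unconstrained best responses satisfies $x\ge y\ge x$ and lies on the diagonal. There $r(x,x)=t+(s-t)x^2$ is nondecreasing, so the point is dominated by $(c,c)$ with $c=\min\{a,b\}$, which lies on the edge $x=a$ or $y=b$. Hence the maximum over the box equals the maximum over these two edges, and your case analysis goes through.

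The bookkeeping obstacle you anticipate also dissolves. Step 2 gives $x^*\ge b$ and $y^*\ge a$, so $x^*\le a$ and $y^*\le b$ can hold together only if $a=b$, where the two formulas coincide. In particular, the set $\{b>a\}\cap\{x^*\le a\}\cap\{y^*\le b\}$ in the statement is empty, and no comparison is needed.

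Moreover, the case $x^*\le a$ needs no edge reduction at all. For every feasible $(x,y)$ you have $r(x,y)\le\sqrt{s^2y^2+t^2(1-y^2)}\le\sqrt{s^2b^2+t^2(1-b^2)}$, and this bound is attained at the feasible point $(x^*,b)$. Symmetrically, $y^*\le b$ gives the $a$-formula directly. Only the corner case $x^*>a$, $y^*>b$ actually relies on the reduction to the two edges.
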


\subsection{Product space}\label{subsec:productspace}
%We provide a relation between quantized maximal correlation and the MMSE in the problem of $M$-level quantization of normalized sums of i.i.d. random variables.

%%%%%%%%%%%%%%%%%%%%%%%%%%%%%%%%%%%%%%
%    Spectral theory high dim        %
%%%%%%%%%%%%%%%%%%%%%%%%%%%%%%%%%%%%%%
One of the most attractive features of the spectral decomposition of the canonical system (Section~\ref{subsec:spectralDecomp}) is its tensorization property. Let $(X^n,Y^n)\sim P_{XY}^{\otimes n}$, and for any vector $u=(u_1,...,u_n)\in \mathbb{N}^n$ define the \textit{product} functions
\begin{align*}
f_u(X^n)=\prod_{i=1}^n f_{u_i}(X_i), \quad g_u(Y^n)=\prod_{i=1}^n g_{u_i}(Y_i).
\end{align*}
It is easy to verify (see, e.g.,~\cite{folland1999modern}, exercise 5.61)  that $\{f_u\}_{u\in\mathbb{N}^n}$ and $\{g_u\}_{u\in\mathbb{N}^n}$ are orthonormal sets that span $L^2(P_X^{\otimes n})$ and $L^2(P_Y^{\otimes n})$ respectively. Thus, any two functions $f\in L^2(P_X^{\otimes n})$ and $g\in L^2(P_Y^{\otimes n})$ can be written as
\begin{align}\label{eq:SpectralFormHighDim}
    f(X^n)=\sum_{u\in \mathbb{N}^n} a_u f_u(X^n),\quad g(Y^n)=\sum_{u\in \mathbb{N}^n} b_u g_u(Y^n), 
\end{align}
where $a_u=\langle f(X^n),f_u(X^n)\rangle$ and $b_u=\langle g(Y^n),g_u(Y^n)\rangle$. Letting $\sigma_u = \prod_{i=1}^n \sigma_{u_i}$, we have
\begin{align*}
   \langle f_u(X^n),g_{v}(Y^n)\rangle&=\left\langle \prod_{i=1}^n f_{u_i}(X_i),\prod_{i=1}^n g_{v_i}(Y_i)\right\rangle \\&=\prod_{i=1}^n \langle f_{u_i}(X_i),g_{v_i}(Y_i)\rangle\\&=\prod_{i=1}^n \sigma_{u_i} \mathds{1}(u_i=v_i) \\&= \sigma_u \mathds{1}(u=v).
\end{align*}
Therefore, in the canonical spectral decomposition of the product distribution, the largest singular value is always $1$ and has multiplicity $1$, corresponding to $u=v=(1,1,\ldots,1)$. The second largest singular value is $\sigma_2$ and has multiplicity $n$, where the $n$ occurrences of $\sigma_2$ correspond to the $n$ singular functions $f_{{u(i)}}(X^n) = f_2(X_i)$ and $g_{{u(i)}}(Y^n) = g_2(Y_i)$, for  $1 \leq i \leq n$, where $u(i)$ has its $i$-th coordinate equal to $2$ and all other entries equal to 1, i.e., $u(i)=\mathds{1}+e_i$. Furthermore, the third largest singular value is $\max \{\sigma_3,\sigma_2^2\}$, and is attained by the vectors $u=(2,2,1,\ldots,1)$ or $u=(3,1,\ldots,1)$, up to permutations.  
\\
We can now write $\E[f(X^n)g(Y^n)]$ as
\begin{align}
   \langle f(X^n),g(Y^n)\rangle= \left\langle \sum_{u\in \mathbb{N}^n} a_u f_u(X^n), \sum_{v\in \mathbb{N}^n} b_{v} g_{v}(Y^n)\right\rangle=\sum_{u\in \mathbb{N}^n} a_u b_u \sigma_u.\label{eq:linear_form}
\end{align}
Since $\sum a_u^2=1$ for any $f$ with $\E[f(X)]=0,\Var(f(X))=1$, it is now easy to see that $\rho_m(X^n;Y^n)=\rho_m(X;Y) = \sigma_2$, and also that the scalar functions $f_2(X_k), g_2(Y_k)$ achieve the maximal correlation for any $k\in [n]$.

To make this formal, we cast the product structure in the language of Section~\ref{sec:spect}. The results there apply verbatim to the canonical system of $P_{XY}^{\otimes n}$, with index universe $\mathbb{N}^n$ and trivial index $(1,\ldots,1)$. Consider the collection of index sets
\begin{align}
    I_2 \triangleq \{u_2(1),\ldots,u_2(n)\}=\{\mathds{1}+e_1,\ldots,\mathds{1}+e_n\},
\end{align}
whose subsystem consists of the $n$ coordinate copies of the scalar pair $(f_2, g_2)$. We write
$\mathcal{H}_{f_2}^n \triangleq \overline{\mathrm{span}} \{ f_{u_2(1)} , \ldots ,f_{u_2(n)} \}$ and $\mathcal{H}_{g_2}^n$ is defined
analogously. By the multiplicity computation above, $H_{P_X^{\otimes n}}(I_2) = \mathcal{H}_{f_2}^n$ and $\sigma(I_2)=\sigma_2$, while the complement carries $\sigma_* \triangleq\sigma(I_2)^*=\max\{\sigma_3,\sigma_2^2\}$. Since $\sigma_2 \geq \sigma_3$ and $\sigma_2 \geq \sigma_2^2$, the $I_2$-subsystem is dominant, thus applying Theorem~\ref{thm:MainResultSubsystems} to the canonical system of $P_{XY}^{\otimes n}$ with dominant set $I_2$ we have the following upper bound on the quantized maximal correlation as a function of the optimal $M$-level MMSE of $\mathcal{H}_{f_2}^n$ and the optimal $N$-level MMSE of $\mathcal{H}_{g_2}^n$.

\begin{corollary}[Upper bound on quantized correlation in product space]
\label{cor:rhoasmmse} For $(s,t)=(\sigma(I_2),\sigma((I_2)^*))$ in \eqref{eq:dfunc},
\begin{align}
    \rho_{M,N}(X^n;Y^n) \leq \Gfunc \left(\sqrt{1-D_M(\mathcal{H}_{f_2}^n)},\sqrt{1-D_N(\mathcal{H}_{g_2}^n)}\right).
    \label{eq:RhoUpperBoundMMSE}
\end{align}
\end{corollary}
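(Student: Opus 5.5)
The corollary will follow by applying Theorem~\ref{thm:MainResultSubsystems} to the product law $P_{XY}^{\otimes n}$, viewed as a single joint distribution on $\mathcal{X}^n\times\mathcal{Y}^n$. The plan is to check each hypothesis of that theorem for the canonical system $\{f_u,g_u,\sigma_u\}_{u\in\mathbb{N}^n}$ and the index set $I_2$, and then to read off the two subspaces.

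First, I will confirm that $\{f_u,g_u,\sigma_u\}$ is a valid canonical system in the sense of Section~\ref{subsec:spectralDecomp}. The families $\{f_u\}$ and $\{g_u\}$ are orthonormal bases of $L^2(P_X^{\otimes n})$ and $L^2(P_Y^{\otimes n})$. They satisfy $\langle f_u,g_v\rangle=\sigma_u\mathds{1}(u=v)$, and the trivial index $(1,\ldots,1)$ gives the constant function with singular value $1$. Lemma~\ref{Lemma:BasicFullPaper} and Theorem~\ref{thm:MainResultSubsystems} use only this biorthogonal structure, through \eqref{eq:subsystem_corr}, and not the ordering of the indices by $\mathbb{N}$. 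So they transfer verbatim once the index universe is taken to be $\mathbb{N}^n\setminus\{(1,\ldots,1)\}$.

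Second, I will identify the quantities for $I=I_2$. By construction, $H_{P_X^{\otimes n}}(I_2)=\mathcal{H}_{f_2}^n$ and $H_{P_Y^{\otimes n}}(I_2)=\mathcal{H}_{g_2}^n$. Since $\sigma_u=\sigma_2$ for every $u\in I_2$, we get $\sigma(I_2)=\sigma_2$.

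For the complement, I need the supremum of $\sigma_u$ over all nontrivial $u\notin I_2$. Any such $u$ has either at least two entries different from $1$, so that $\sigma_u\le\sigma_2^2$, or exactly one entry, at least $3$, so that $\sigma_u\le\sigma_3$. Hence $\sigma(I_2^*)\le\max\{\sigma_3,\sigma_2^2\}$, with equality attained at $u=(3,1,\ldots)$ or $(2,2,1,\ldots)$. Strictly, only the inequality is needed, because $d$ is nondecreasing in $t$.

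Dominance then follows from $\sigma_2\ge\sigma_3$ and $\sigma_2\ge\sigma_2^2$. If $\sigma_2=0$ the statement is trivial, since both sides are then handled by $\rho_{M,N}=0\le d$. Otherwise $I_2$ is nonempty, excludes the trivial index, and yields nonzero subspaces orthogonal to the constants.

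With these checks in place, \eqref{eq:thm1eq2} applied with $\pi=P_X^{\otimes n}$ and $\pi=P_Y^{\otimes n}$ gives exactly \eqref{eq:RhoUpperBoundMMSE}. Here the families $\mathcal{F}_M^n(P_X)$ and $\mathcal{F}_N^n(P_Y)$ play the role of $\mathcal{F}_M(\pi)$ and $\mathcal{F}_N(\pi)$, which matches Definition~\ref{def:quant_maxcorr}.

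The only delicate step is the complement computation and the dominance check. In particular, the case where several singular values tie, or where $\sigma_2=0$, should be handled by the convention $\sup\emptyset=0$ and by the monotonicity of $d$ in $t$.
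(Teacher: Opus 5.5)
Your proposal is correct and follows essentially the same route as the paper. Both apply Theorem~\ref{thm:MainResultSubsystems} to the canonical system of $P_{XY}^{\otimes n}$ with index set $I_2$, identify $H_{P_X^{\otimes n}}(I_2)=\mathcal{H}_{f_2}^n$ and $\sigma(I_2)=\sigma_2$, and obtain dominance from $\sigma(I_2^*)\le\max\{\sigma_3,\sigma_2^2\}\le\sigma_2$. Your remark that only this inequality matters, with monotonicity of $d$ in $t$ justifying the later substitution $t=\max\{\sigma_3,\sigma_2^2\}$, is a worthwhile refinement: for $n=1$ the complement carries only $\sigma_3$, so the paper's asserted equality $\sigma(I_2^*)=\max\{\sigma_3,\sigma_2^2\}$ can fail.
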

Assuming there is no cardinality constraint on $Y$, i.e., $N=\infty$, we have $D_\infty(\mathcal{H}_{g_2}^n)=0$ as there is no quantization loss, thus Corollary~\ref{cor:rhoasmmse} reduces to $  \rho_{M,\infty}(X^n;Y^n)\leq d(\sqrt{1-D_M(\mathcal{H}_{f_2}^n)},1)$.
This implies we should set $b=1,x^*=1$ in the upper bound of Lemma~\ref{lem:simplify_func}, which results in the following corollary, which quantifies the minimal correlation loss w.r.t. maximal correlation that arises from restricting $X$ to take only $M$ values, in terms of the MMSE loss. 
\begin{corollary}[Upper bound on correlation under single-side quantization]\label{cor:delta}
   It holds that
   \begin{align}
    (  \rho_{M,\infty}(X^n;Y^n))^2\leq \sigma_2^2-\Delta \cdot D_M(\mathcal{H}_{f_2}^n)=\rho_m^2(X,Y)-\Delta \cdot D_M(\mathcal{H}_{f_2}^n),
   \end{align}
   where $\Delta = \sigma_2^2-\sigma_*^2$. Similarly, $(  \rho_{\infty,N}(X^n;Y^n))^2\leq \rho_m^2(X,Y)-\Delta \cdot D_N(\mathcal{H}_{g_2}^n)$.
\end{corollary}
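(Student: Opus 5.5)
We start from Corollary~\ref{cor:rhoasmmse} with $N=\infty$. Since $D_\infty(\mathcal{H}_{g_2}^n)=0$, this gives $\rho_{M,\infty}(X^n;Y^n)\leq d(a,1)$ with $a\triangleq\sqrt{1-D_M(\mathcal{H}_{f_2}^n)}$ and $(s,t)=(\sigma_2,\sigma_*)$, where $0\leq t\leq s$. The task then reduces to evaluating $d(a,1)$ explicitly. We can do this directly rather than through the case analysis of Lemma~\ref{lem:simplify_func}.

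For fixed $x\in[0,a]$, the function $r(x,y)=sxy+t\sqrt{(1-x^2)(1-y^2)}$ is the inner product of $(sx,\,t\sqrt{1-x^2})$ with the unit vector $(y,\sqrt{1-y^2})$. By Cauchy--Schwarz,
\[
\sup_{0\leq y\leq 1} r(x,y)=\sqrt{s^2x^2+t^2(1-x^2)},
\]
with the maximizer $y=sx/\sqrt{s^2x^2+t^2(1-x^2)}\in[0,1]$ always admissible. This is the point where $b=1$ makes the $y$-constraint vacuous; it matches setting $x^*=1$, $b=1$ in Lemma~\ref{lem:simplify_func}, which falls in the second case. The resulting expression $t^2+(s^2-t^2)x^2$ is nondecreasing in $x$ because $s\geq t$, so the supremum over $x\in[0,a]$ is attained at $x=a$. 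Therefore
\[
d(a,1)^2=s^2a^2+t^2(1-a^2).
\]

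Substituting $a^2=1-D_M(\mathcal{H}_{f_2}^n)$ gives
\[
(\rho_{M,\infty})^2\leq \sigma_2^2(1-D_M)+\sigma_*^2 D_M=\sigma_2^2-(\sigma_2^2-\sigma_*^2)D_M=\sigma_2^2-\Delta\, D_M(\mathcal{H}_{f_2}^n).
\]
Since $\sigma_2=\rho_m(X;Y)$, this yields the stated equality. The bound for $\rho_{\infty,N}$ follows by the symmetric argument with $M=\infty$: then $D_\infty(\mathcal{H}_{f_2}^n)=0$, so one bounds $d(1,b)$ by maximizing over $x$ first, using the symmetry of $r$ in $(x,y)$.

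There is no real obstacle here. The only points needing care are checking that the Cauchy--Schwarz maximizer lies in $[0,1]$ and using $s\geq t$, which follows from dominance of $I_2$, for the monotonicity in $x$.
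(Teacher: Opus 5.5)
Your proposal is correct and follows the paper's route: apply Corollary~\ref{cor:rhoasmmse} with $N=\infty$ so that $D_\infty(\mathcal{H}_{g_2}^n)=0$, then evaluate $d(a,1)=\sqrt{a^2\sigma_2^2+(1-a^2)\sigma_*^2}$ with $a^2=1-D_M(\mathcal{H}_{f_2}^n)$, and square. The only difference is that you compute $d(a,1)$ directly, via Cauchy--Schwarz in $y$ and monotonicity in $x$ (using $\sigma_2\geq\sigma_*$), whereas the paper reads it off the second case of Lemma~\ref{lem:simplify_func} with $b=1$, $x^*=1$; your version is self-contained and avoids that lemma's case analysis.
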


We conclude the section by noting that $D_M(\mathcal{H}_{f_2}^n)$ can be identified with a scalar quantization problem. Since $f_2(X_1),\ldots,f_2(X_n)$ are i.i.d.\ copies of the scalar source $f_2(X)$, and are orthonormal in $L^2(P_X^{\otimes n})$, the unit sphere of $\mathcal{H}_{f_2}^n$ consists exactly of the normalized linear combinations $\sum_{i=1}^n a_i f_2(X_i)$, $\|a\|_2=1$. Hence $D_M(\mathcal{H}_{f_2}^n)$ is the distortion in $M$-level quantization of a normalized linear combination of i.i.d.\ copies, which we study in Sections~\ref{sec:ratedist} and~\ref{sec:anticont}. Surprisingly, the maximal projection is exactly the variance of the optimal MMSE estimator.

% **********************************
% Gaussian Chapter
% **********************************

\section{A Gaussian Example}\label{sec:gaussian}
In most cases, finding the optimal $M$-level MMSE of subspace $\mathcal{H}_{f_2}^n$ is hard. However, when the subspace is such that the optimal quantizer is known, as in the Gaussian case below, Corollary~\ref{cor:rhoasmmse} can lead to an upper bound on the quantized correlation that is almost tight.

\begin{example}\label{ex:gaussian}
    Let $(X^n,Y^n)$ be i.i.d. pairs of  $\rho$-correlated jointly Gaussian random variables with expectation zero and unit variance, and assume w.l.o.g. that $\rho>0$. Then
    $$\rho_b(X^n;Y^n) \leq \frac{2}{\pi}\rho + \left(1-\frac{2}{\pi}\right)\rho^2.$$    
\end{example}
The assumption $\rho>0$ is without loss of generality since we can define $\tilde{Y}=-Y$ and consider $\rho_b(X^n;\tilde{Y}^n)$. Note that, as $P_X=P_Y$ and $M=N=2$, the two arguments of the function $d$ in the upper bound of Corollary~\ref{cor:rhoasmmse} coincide, so we evaluate its right-hand side via Lemma~\ref{lem:simplify_func} with $a=b$ and $(s,t) = (\sigma_2, \sigma_*)$, where $\sigma_*=\max \{\sigma_3, \sigma_2^2\}$ for tensor products, as established in Section~\ref{subsec:productspace}. It is easy to check from eq.~\eqref{eq:maximizer} that whenever $a=b$, we have $x^*=y^*\geq a$, implying the upper bound
\begin{align}
   \rho_b(X^n;Y^n)\leq  (1-D_2(\mathcal{H}_{f_2}^n))\sigma_2+D_2(\mathcal{H}_{f_2}^n)\sigma_*.\label{eq:same_marginals}
\end{align}
For the jointly Gaussian case with $\rho>0$, it was shown in~\cite{LancasterComplete1958} that $f_2(X)=X$ and $g_2(Y)=Y$, and the singular values are $\sigma_i = \rho^{i-1}$. 
Moreover, the MMSE $1$-bit quantizer of a standard Gaussian random variable is $Q(X) = \sqrt{\frac{2}{\pi}} \cdot\mathrm{sign}(X)$, implying that $D_2(\mathcal{H}_{f_2})=1-\frac{2}{\pi}$, establishing the claim for the $n=1$ case. For general $n$, note that any unit norm linear combination of i.i.d. standard Gaussian r.v.s is itself a standard Gaussian, so it admits the same distortion  $D_2(\mathcal{H}_{f_2}^n)=1-\frac{2}{\pi}$, which yields the stated bound.
It is however known from the work of Borell~\cite{borell1985geometric} that in the Gaussian case, the binary maximal correlation tensorizes and is achieved by a one-dimensional threshold function, implying $\rho_b(X^n;Y^n) = \frac{2}{\pi} \arcsin \rho = \frac{2}{\pi} \rho + O(\rho^3)$ by Sheppard formula. This agrees with our bound to first order.  In Figure~\ref{fig:Gaussian} we compare our upper bound with the exact value, attained by the threshold functions $\mathrm{sign}(X)$, $\mathrm{sign}(Y)$.

\begin{figure}[ht]
    \centering    \includegraphics[scale=0.30]{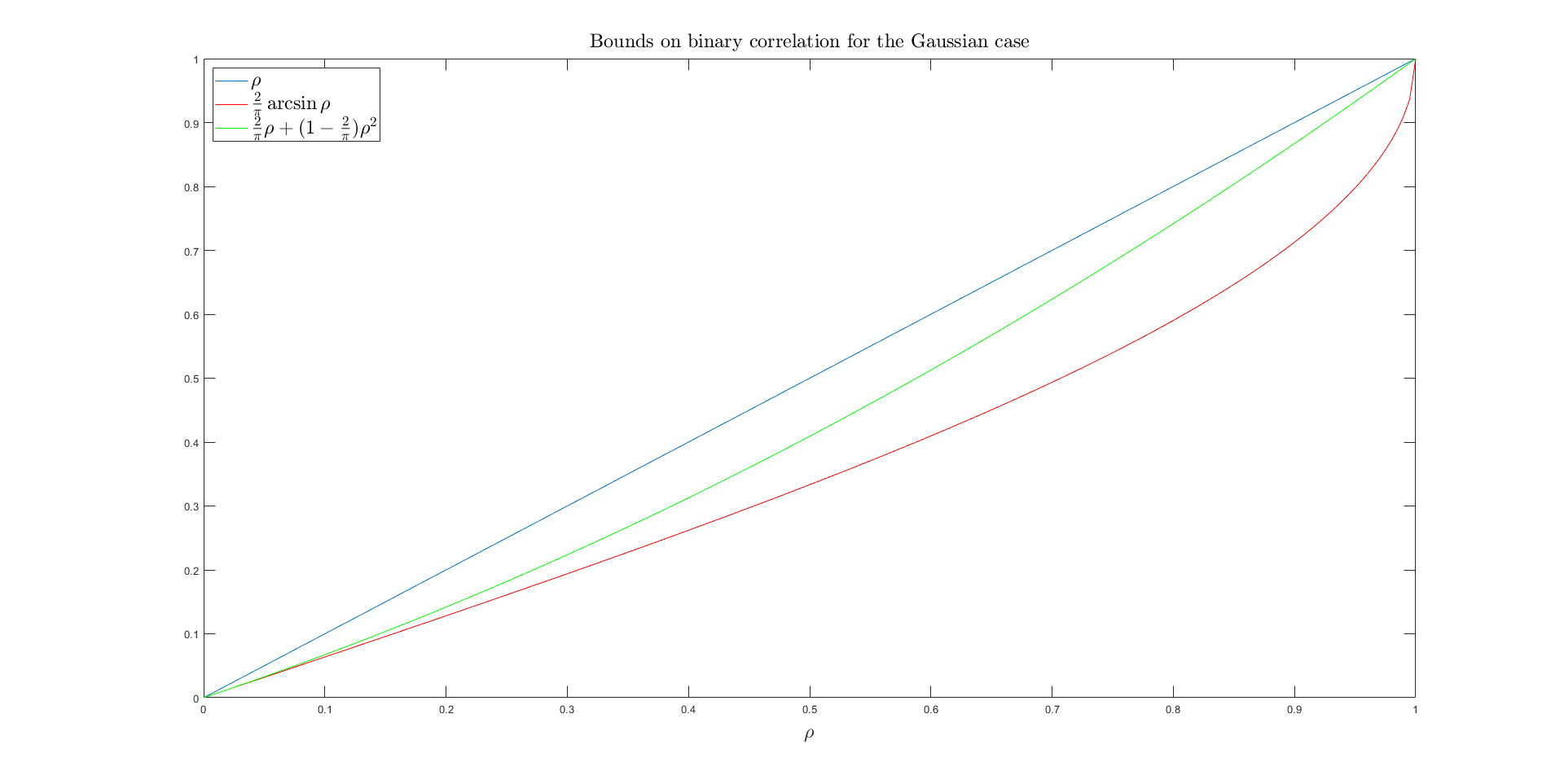}
    \caption{\label{fig:Gaussian} Bounds on the binary maximal correlation in the Gaussian case: the trivial bound $\rho$, our upper bound $\frac{2}{\pi}\rho+(1-\frac{2}{\pi})\rho^2$, and the exact value $\frac{2}{\pi}\arcsin\rho$.}
\end{figure}

% **********************************
% RD Chapter
% **********************************

\section{MSE Lower Bounds on quantization of linear combinations of i.i.d. random variables - \\The rate distortion approach}\label{sec:ratedist}
In Corollary~\ref{cor:rhoasmmse} we derived an upper bound on the $(M,N)$-quantized maximal correlation as a function of two quantities: the minimal MSE that can be achieved in reconstructing a unit-norm linear combination of $f_2(X_1),\ldots,f_2(X_n)$ from its $M$-value representation and, similarly, the minimal MSE in reconstructing a unit-norm linear combination of $g_2(Y_1),\ldots,g_2(Y_n)$ from its $N$-value representation. The next sections are dedicated to deriving MSE lower bounds on representations of linear combinations of random variables using $M$-level quantizers. This section leverages classic rate distortion results (i.e., source coding) from information theory to obtain lower bounds on the MSE in quantization  of linear combinations. 
In the following, $U_1,\ldots,U_n$ is a sequence of $n$ i.i.d random variables over alphabet $\mathcal{U}$ with zero mean and unit norm, and 
\begin{align}
   \phi_{a}(U^n) \triangleq a^TU^n=\sum_{i=1}^n a_i U_i 
\end{align} represents a linear combination of $U^n$ with vector weights $a$. Furthermore, let $\phi_{M}$ be an $M$-level quantized version of $\phi_{a}$. 

\begin{lemma}\label{lem:EPI}
    Let $U^n$ be an independent random vector with densities. Then for any $a \in \mathbb{R}^n$, it holds that
    \begin{align}
        h\left( \phi_{a}(U^n)\right) \geq \min_i h(U_i) + \log ||a||_2.
    \end{align}

\end{lemma}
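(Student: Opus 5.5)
The natural route is the entropy power inequality (EPI) in its ``weighted'' (Lieb) form. First dispose of degenerate weights: indices with $a_i=0$ contribute nothing to $\phi_a(U^n)$. Restricting the minimum to the support of $a$ only makes $\min_i h(U_i)$ larger, so the bound with the full minimum is weaker. We may therefore assume all $a_i\neq 0$ and $a\neq 0$ (for $a=0$ the right-hand side is $-\infty$ and there is nothing to prove). Next, use the scaling identity $h(a_iU_i)=h(U_i)+\log|a_i|$, which holds for any scalar multiple of a random variable with a density.

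Write $\lambda_i\triangleq a_i^2/\|a\|_2^2$, so that $\sum_i\lambda_i=1$, and set $V_i\triangleq \frac{a_i}{\sqrt{\lambda_i}}U_i=\pm\|a\|_2U_i$. Then $\phi_a(U^n)=\sum_i\sqrt{\lambda_i}V_i$ with the $V_i$ independent. I would invoke Lieb's form of the EPI: for independent random variables with densities and weights $\lambda_i\ge0$ summing to one,
\begin{align*}
h\Bigl(\sum_i\sqrt{\lambda_i}V_i\Bigr)\ \geq\ \sum_i\lambda_i h(V_i).
\end{align*}
This form is equivalent to the classical EPI $e^{2h(\sum W_i)}\ge\sum_i e^{2h(W_i)}$, by taking logarithms and applying Jensen/concavity of $\log$ to $W_i=\sqrt{\lambda_i}V_i$. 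If only the two-variable EPI is taken as known, I would first extend it to $n$ summands by induction, using that a sum of independent variables with densities again has a density.

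Substituting $h(V_i)=h(U_i)+\log\|a\|_2$ gives
\begin{align*}
h(\phi_a(U^n))\ \ge\ \sum_i\lambda_i h(U_i)+\log\|a\|_2\ \ge\ \min_i h(U_i)+\log\|a\|_2,
\end{align*}
since a convex combination dominates the minimum. This is the claim.

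The only genuinely nontrivial ingredient is the EPI itself, which I would cite rather than reprove. Some care is needed with edge cases where differential entropies are $-\infty$ or where densities exist but entropies are undefined. I would handle these by noting that if some $h(U_i)=-\infty$, the right-hand side is $-\infty$ and the bound is trivial. Otherwise I would assume, as is implicit in the statement, that all $h(U_i)$ are finite, which is the standard setting for the EPI.
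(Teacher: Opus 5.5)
Your proof is correct and is essentially the paper's argument. Both combine the EPI with the scaling identity $h(a_iU_i)=h(U_i)+\log|a_i|$ and then bound by the minimum entropy. The only difference is the form of the EPI. The paper works directly with the exponential form $2^{2h(\sum_i a_iU_i)}\geq\sum_i a_i^2\, 2^{2h(U_i)}\geq 2^{2\min_i h(U_i)}\|a\|_2^2$, whereas you pass through Lieb's equivalent form and the weighted average $\sum_i\lambda_i h(U_i)$. Your explicit treatment of zero weights, where the scaling identity breaks down, is a small bit of care that the paper's proof omits.
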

\begin{proof}
    By entropy power inequality (Theorem 17.7.3 in~\cite{cover2012elements}), we have
    \begin{align*}
        2^{2h\left(\sum a_i U_i\right)} & \geq \sum 2^{2h(a_i U_i)}      \\
             & = \sum 2^{2(h(U_i)+\log|a_i|)} \\
            & = \sum a_i^2 2^{2h(U_i)}       \\
            & \geq 2^{2\min_i h(U_i)} \sum a_i^2.
    \end{align*}
Taking logarithm on both sides completes the proof.
\end{proof}
As in our case $\{U_i\}$ is an i.i.d. sequence and $||a||_2^2=1$, the lemma implies $h( \phi_{a}(U^n)) \geq h(U)$, where $U$ is a random variable with the same distribution as $U_i$. For brevity, from now on we denote $\phi(U^n)=\phi_a(U^n)$. 

\begin{definition}[Optimal $M$-level MMSE of a linear combination]\label{def:DnM}
%\dror{\textbf{Or} - Please approve of this definition: a) changes in display (elevated this instantiation of $D_M^n$ to *the* formal definition, per the generalization of III (In short: compare with the more general Definition~\ref{def:MMSEopt}, and connect the two via~\eqref{eq:bridge}) b) changing $\min$ to $\inf$ c) domain mismatch: $\mathcal{X}$ carried over from previous chapters). quickest repair: change the definition's domain and don't touch the chapter: I'll adjust my calls to this def inasmuch as I refer to it in VIII. Either way, you should approve of any changes to this chapter w.r.t itself}

\begin{align}
D_M^n(U)\triangleq \inf_{\substack{\fM:\mathcal{U}^n\to \{c_1,\ldots,c_M\} \\ a\in \mathbb{R}^n, \|a\|_2=1}}\E\left(f(U^n)-\sum_{i=1}^n a_i U_i\right)^2. 
\end{align}
\end{definition}
\begin{remark}
  Note that the assignment $U=f_2(X)$ when $f_2$ is w.r.t. some joint distribution $P_{XY}$, gives rise to the identity \begin{align}
   D_M^n(f_2(X))=D_M(\mathcal{H}_{f_2}^n).
\end{align}  
\end{remark}

% Let
% \begin{align}
% D_M^n(U)\triangleq \min_{\substack{\fM:\mathcal{X}^n\to \{c_1,\ldots,c_M\} \\ a\in \mathbb{R}^n\|a\|_2=1}}\E\left(f(X^n)-\sum_{i=1}^n a_i U_i\right)^2. 
% \end{align}
\begin{theorem}\label{thm:Rate_dist_bound}
Let $U^n\sim f_U^{\otimes n}$ and let $\mathsf{N}(U)=\frac{1}{2\pi e}2^{2h(U)}$ be the \textit{entropy power} of $U$. It holds that
\begin{align}
    D_M^n(U) \geq \frac{\mathsf{N}(U)}{M^2}.
\end{align}
\end{theorem}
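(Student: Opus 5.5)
Fix an arbitrary unit vector $a$ and an arbitrary $M$-level function $f:\mathcal{U}^n\to\{c_1,\ldots,c_M\}$, and set $V\triangleq\phi(U^n)=\sum_i a_iU_i$ and $\hat V\triangleq f(U^n)$. The plan is to show $\E(\hat V-V)^2\geq \mathsf{N}(U)/M^2$ by the standard rate-distortion (Shannon lower bound) argument applied to the scalar source $V$, and then take the infimum over $(f,a)$.

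\textbf{Step 1 (rate constraint).} Since $\hat V$ takes at most $M$ values,
\[
I(V;\hat V)\leq H(\hat V)\leq \log M .
\]

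\textbf{Step 2 (Shannon lower bound).} Let $D=\E(V-\hat V)^2$. Write $I(V;\hat V)=h(V)-h(V\mid\hat V)$ and bound $h(V\mid \hat V)=h(V-\hat V\mid \hat V)\leq h(V-\hat V)$. The Gaussian maximizes differential entropy under a second-moment constraint, so $h(V-\hat V)\leq\frac12\log(2\pi e D)$. Hence
\[
\log M\geq h(V)-\tfrac12\log(2\pi e D),
\qquad\text{i.e.}\qquad
D\geq \frac{2^{2h(V)}}{2\pi e\, M^2}.
\]

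\textbf{Step 3 (EPI).} By Lemma~\ref{lem:EPI} with $\|a\|_2=1$ and the $U_i$ i.i.d., $h(V)\geq h(U)$. Therefore $D\geq \frac{1}{2\pi e}2^{2h(U)}/M^2=\mathsf{N}(U)/M^2$. Since this holds for every admissible $(f,a)$, it also holds for the infimum $D_M^n(U)$.

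\textbf{Main obstacle.} The only delicate point is regularity. Lemma~\ref{lem:EPI} requires $U$ to have a density; this is implicit in $U^n\sim f_U^{\otimes n}$, and if $h(U)=-\infty$ the bound is trivial. We also need $V$ to have a density and finite $h(V)$ so that the mutual-information identity is legitimate. The cleanest way to handle this is to split $I(V;\hat V)=\sum_{k}\Pr(\hat V=c_k)\,D(P_{V|\hat V=c_k}\|P_V)$ over the finitely many cells. Each conditional law $P_{V\mid\hat V=c_k}$ is absolutely continuous, so $h(V\mid\hat V)=\sum_k \Pr(\hat V=c_k)\,h(V\mid \hat V=c_k)$, and the per-cell Gaussian bound combined with concavity of $\log$ gives the same inequality. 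Finiteness of $\E V^2=1$ ensures that $h(V)<\infty$, so no $\infty-\infty$ issue arises.
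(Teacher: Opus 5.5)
Your proof is correct and follows the paper's argument step for step. The steps are: $\log M\geq I(\phi(U^n);\phi_M)$; translation invariance together with the fact that conditioning reduces differential entropy; the Gaussian maximum-entropy bound on the error; and finally Lemma~\ref{lem:EPI} to replace $h(\phi(U^n))$ by $h(U)$. Your version is slightly more careful in two places: you bound $\E(\hat V-V)^2$ for each fixed $(f,a)$ before taking the infimum (the paper substitutes $D_M^n(U)$ directly into the Gaussian bound), and you address regularity cell by cell.
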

\begin{proof} 
For any $a\in \mathbb{R}^n$ and any $M$-level quantizer $\phi_M=\phi_M(\phi_a)$, we have
\begin{align}
\nonumber    \log M &\geq I(\phi(U^n); \phi_M) \\& = h(\phi(U^n)) - h(\phi(U^n)|\phi_M)\\
 & = h(\phi(U^n)) - h(\phi(U^n)-\phi_M|\phi_M)\label{eq:dif_cons}\\
& \geq h(\phi(U^n)) - h(\phi(U^n)-\phi_M)\label{eq:conditioning}\\ &\geq h(\phi(U^n)) - \frac{1}{2}\log (2\pi e D_M^n(U))\label{eq:DistBoundPt2}, 
\end{align}
where~\eqref{eq:dif_cons} follows since $h(X+c)=h(X)$,~\eqref{eq:conditioning} follows since conditioning reduces differential entropy, and~\eqref{eq:DistBoundPt2} follows since the maximal differential entropy under a second-moment constraint is attained by a Gaussian distribution (Theorem 8.6.5 in~\cite{cover2012elements}), i.e., $\max_{\substack{\E(Y^2)\leq D}} h(Y) =  \frac{1}{2}\log (2\pi e D)$. Combining \eqref{eq:DistBoundPt2} and Lemma~\ref{lem:EPI}, we get
\begin{align}
    D_M^n(U) \geq \frac{1}{2 \pi e}2^{2(h(\phi(U^n))-\log M)}\geq \frac{1}{2 \pi e}2^{2(h(U)-\log M)}.
\end{align}
\end{proof}
Theorem~\ref{thm:Rate_dist_bound} provides a lower bound for any i.i.d. continuous r.v.s with densities $f_U$. In order to extend its result to discrete random variables with p.m.f. $P_U$ as well, we convert a discrete distribution to a continuous one by adding a random noise $Z$ supported on a small enough interval such that $U$ is still recoverable from $U+Z$. For any discrete $U$ supported over the alphabet $\mathcal{U}$, we define $d_{\min}$ as the smallest distance between any two letters in $\mathcal{U}$, that is,
\begin{align}
  d_{\min}=\underset{\underset{u_i\neq u_j}{u_i, u_j\in \mathcal{U}} }{\min}|u_i-u_j|.  
\end{align}
If we now add a continuous noise $Z$ supported on $(-d_{\min}/2,d_{\min}/2)$ to the source $U$, it is guaranteed that $U$ can be recovered without loss from $U+Z$. This gives rise to the following result.

\begin{theorem}\label{thm:distort_discrete_case}
    Let $U^n\sim P_U^{\otimes n}$ and let $Z$ be some continuous random variable supported on $(-d_{\min}/2,d_{\min}/2)$ with zero mean and variance $\sigma^2$. It holds that
    \begin{align}
   D_M^n(U) \geq \sigma^2\left(\frac{2^{2H(U)}}{M^2}\cdot\frac{\mathsf{N}(Z)}{\sigma^2}-1\right).
\end{align}
\end{theorem}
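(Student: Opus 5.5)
The plan is to reduce to the continuous case of Theorem~\ref{thm:Rate_dist_bound} by smoothing. Let $Z_1,\ldots,Z_n$ be i.i.d.\ copies of $Z$, independent of $U^n$, and set $V_i=U_i+Z_i$. Each $V_i$ has a density, and $U_i$ is a deterministic function of $V_i$, since the supports of $u+Z$ for distinct letters $u$ are disjoint intervals. Hence
\begin{align*}
h(V)=H(U)+h(Z),
\end{align*}
obtained by conditioning on $U$ (which is recoverable from $V$), so $h(V)=h(V|U)+H(U)$. This gives $\mathsf{N}(V)=2^{2H(U)}\mathsf{N}(Z)$.

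Fix a unit vector $a$ and an $M$-level quantizer $f$ of $U^n$. Then $f(U^n)$ is also an $M$-level function of $V^n$, namely $f$ composed with the decoding map. For this quantizer and the linear combination $\sum a_iV_i$, I will rerun the chain of inequalities in the proof of Theorem~\ref{thm:Rate_dist_bound}. Together with Lemma~\ref{lem:EPI}, it gives
\begin{align*}
\E\Bigl(f(U^n)-\sum a_iV_i\Bigr)^2\ \geq\ \frac{\mathsf{N}(V)}{M^2}=\frac{2^{2H(U)}\mathsf{N}(Z)}{M^2}.
\end{align*}
I will not quote the theorem as stated, because $V$ need not have zero mean and unit variance. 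Instead I will use the chain pointwise in $(f,a)$, which does not require normalization.

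Next I expand the left side:
\begin{align*}
\E\Bigl(f(U^n)-\sum a_iU_i-\sum a_iZ_i\Bigr)^2=\E\Bigl(f(U^n)-\sum a_iU_i\Bigr)^2+\sigma^2.
\end{align*}
The cross term vanishes because $Z^n$ is zero-mean and independent of $U^n$, and $\E(\sum a_iZ_i)^2=\sigma^2\|a\|_2^2=\sigma^2$. Rearranging gives
\begin{align*}
\E\Bigl(f(U^n)-\sum a_iU_i\Bigr)^2\geq \frac{2^{2H(U)}\mathsf{N}(Z)}{M^2}-\sigma^2,
\end{align*}
and taking the infimum over $(f,a)$ yields the claimed bound on $D_M^n(U)$.

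The main point to check is the step that uses the rate-distortion chain with a quantizer that depends on $V^n$ rather than directly on $\phi(V^n)$. The bound $\log M\geq I(\phi(V^n);\phi_M)$ only needs $\phi_M$ to take at most $M$ values, so $H(\phi_M)\leq\log M$. Every later step holds for any $M$-valued random variable, so this causes no problem. The other item is the entropy identity $h(V)=H(U)+h(Z)$. It needs the translates $u+\mathrm{supp}(Z)$ to be disjoint, which holds because $Z$ is supported on the open interval $(-d_{\min}/2,d_{\min}/2)$.
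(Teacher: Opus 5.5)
Your proof is correct and follows essentially the same route as the paper: you add independent noise $Z_i$ so that $U_i$ is recoverable from $U_i+Z_i$, run the rate-distortion chain (cardinality bound on mutual information, conditioning reduces entropy, Gaussian maximum entropy, Lemma~\ref{lem:EPI}), split off the noise variance $\sigma^2\|a\|_2^2=\sigma^2$ using independence, and use $h(U+Z)=H(U)+h(Z)$. Your handling is slightly cleaner than the paper's: you bound $I(\sum_i a_iV_i;f(U^n))\leq H(f(U^n))\leq\log M$ directly for an arbitrary quantizer of $U^n$ and take the infimum over $(f,a)$ only at the end, whereas the paper goes through data processing and the step $\E(\phi(U^n)-\phi_M)^2\leq D_M^n(U)$, which only makes sense for a (near-)optimal quantizer.
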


\begin{proof}

For any $a\in \mathbb{R}^n$ and any $M$-level quantizer $\phi_M=\phi_M(\phi_a)$, we have the following chain of (in)equalities:
\begin{align}
    \log M &\geq I\left(\phi(U^n);\phi_M\right) \\&\geq I\left(\sum _{i=1}^n a_i (U_i+Z_i);\phi_M\right)\label{eq:DFI}
    \\&= h\left(\sum _{i=1}^n a_i (U_i + Z_i)\right) - h\left(\sum _{i=1}^n a_i(U_i + Z_i)\middle|\phi_M\right)\\&= h\left(\sum _{i=1}^n a_i (U_i + Z_i)\right) - h\left(\sum _{i=1}^n a_i(U_i + Z_i)-\phi_M\middle|\phi_M\right)\\&\geq h\left(\sum _{i=1}^n a_i (U_i + Z_i)\right) - h\left(\sum _{i=1}^n a_i(U_i + Z_i)-\phi_M\right)\\
    &\geq h\left(\sum _{i=1}^n a_i (U_i + Z_i)\right) - \frac{1}{2}\log \left(2\pi e \E \left(\sum _{i=1}^n a_i U_i - \phi_M + \sum _{i=1}^n a_i Z_i \right)^2 \right)\label{eq:Gaussian}\\
&\geq h\left(\sum _{i=1}^n a_i (U_i + Z_i)\right) - \frac{1}{2}\log (2\pi e(D_M^n(U) + \sigma^2 ||a||_2^2) )\label{eq:disto}\\
&\geq h(U+Z)  - \frac{1}{2}\log (2\pi e(D_M^n(U) + \sigma^2 ) )\label{eq:EPIClaim}
\\& =H(U)+h(Z)- \frac{1}{2}\log \left(2\pi e\left(D_M^n(U) + \sigma^2 \right), \right)\label{eq:sum_ent}
\end{align}
where~\eqref{eq:DFI} follows from the data processing inequality since $\phi_M - \phi(U^n) -  (\phi(U^n) + \sum a_i Z_i)$ forms a Markov chain,~\eqref{eq:Gaussian} follows since differential entropy is maximized by a Gaussian r.v. under a second moment constraint,~\eqref{eq:disto} follows from $\E(\phi(U^n)-\phi_M)^2\leq D_M^n(U)$,~\eqref{eq:EPIClaim} follows from Lemma~\ref{lem:EPI}, and~\eqref{eq:sum_ent} follows from $h(U+Z) = I(U;U+Z) + h(Z) = H(U)+h(Z)$. The result is achieved by rearranging terms.
\end{proof}
By choosing a specific distribution on $Z$, Theorem~\ref{thm:distort_discrete_case} above admits a more elegant form.
\begin{corollary}\label{cor:unif}
For $U^n\sim P_U^{\otimes n}$, we have
    \begin{align}
D_M^n(U) \geq \frac{d_{\min}^2}{12}
\left(\frac{12}{2\pi e}\cdot\frac{2^{2H(U)}}{ M^2}-1\right)
\end{align}
\end{corollary}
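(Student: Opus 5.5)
The plan is to apply Theorem~\ref{thm:distort_discrete_case} with $Z$ uniform on the open interval $(-d_{\min}/2,d_{\min}/2)$. This choice meets the support requirement, so $U$ is still recoverable from $U+Z$. The remaining work is to compute the variance and the entropy power of this uniform noise and substitute them.

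First, $Z\sim\mathrm{Unif}(-d_{\min}/2,d_{\min}/2)$ has zero mean and variance
\begin{align*}
\sigma^2=\frac{d_{\min}^2}{12}.
\end{align*}
Its differential entropy is $h(Z)=\log d_{\min}$. Hence its entropy power is
\begin{align*}
\mathsf{N}(Z)=\frac{1}{2\pi e}2^{2\log d_{\min}}=\frac{d_{\min}^2}{2\pi e},
\end{align*}
which gives the ratio
\begin{align*}
\frac{\mathsf{N}(Z)}{\sigma^2}=\frac{12}{2\pi e}.
\end{align*}

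Next, substituting these into the bound of Theorem~\ref{thm:distort_discrete_case} yields
\begin{align*}
D_M^n(U)\geq \frac{d_{\min}^2}{12}\left(\frac{12}{2\pi e}\cdot\frac{2^{2H(U)}}{M^2}-1\right),
\end{align*}
which is the claimed inequality.

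The only potential subtlety is whether the open support and the strict requirement $|Z|<d_{\min}/2$ suffice for the identity $h(U+Z)=H(U)+h(Z)$ used in Theorem~\ref{thm:distort_discrete_case}. They do: the shifted intervals $u+(-d_{\min}/2,d_{\min}/2)$ for distinct letters $u\in\mathcal{U}$ are pairwise disjoint, so $U$ is a deterministic function of $U+Z$ almost surely. The rest is direct computation.
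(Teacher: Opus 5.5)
Your proposal is correct and matches the paper's proof. The paper also takes $Z$ uniform on $(-d_{\min}/2,d_{\min}/2)$ in Theorem~\ref{thm:distort_discrete_case}, so that $\sigma^2=d_{\min}^2/12$ and $h(Z)=\log d_{\min}$, which gives $\mathsf{N}(Z)/\sigma^2=12/(2\pi e)$.
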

The corollary follows by setting the distribution of $Z$ to be uniform over $(-d_{\min}/2,d_{\min}/2)$, which implies that $\sigma^2=\frac{d_{\min}^2}{12}$ and $h(Z)=\log (d_{\min})$.
\begin{corollary}
For $U^n\sim P_U^{\otimes n}$, we have
\begin{align}
D_M^n(U) \geq \max_{ \sigma_Z \leq \frac{1}{12}d_{\min}^2} \sigma^2 \left[(2\Phi(\alpha) - 1)^2 2^{ - \frac{2\alpha\varphi(\alpha)}{2\Phi(\alpha) - 1}}\cdot\frac{2^{2H(U)}}{M^2} - 1\right] 
\end{align}
where  
$\alpha=\frac{d_{\min}}{2\sigma}$, and $   \sigma_Z=\sigma\left(1-\frac{2\alpha\cdot \varphi(\alpha)}{2\Phi(\alpha) - 1}\right)$, where $\varphi$ and $\Phi$ are the PDF and CDF of a standard Gaussian distribution, respectively. 
\end{corollary}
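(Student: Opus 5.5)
The plan is to specialize Theorem~\ref{thm:distort_discrete_case} to a \emph{truncated Gaussian} smoothing noise, much as Corollary~\ref{cor:unif} specialized it to uniform noise. Fix $\sigma>0$, set $\alpha=\frac{d_{\min}}{2\sigma}$, and let $Z$ have density proportional to $e^{-z^2/(2\sigma^2)}$ on $(-d_{\min}/2,d_{\min}/2)$. This $Z$ is continuous, has zero mean by symmetry, and is supported on the required interval. So Theorem~\ref{thm:distort_discrete_case} applies with $\sigma^2$ there replaced by $\mathrm{Var}(Z)$. After expanding the product, that bound reads
\begin{align*}
D_M^n(U)\;\geq\; \frac{2^{2H(U)}}{M^2}\,\mathsf{N}(Z)-\mathrm{Var}(Z).
\end{align*}
What remains is to compute $\mathrm{Var}(Z)$ and $\mathsf{N}(Z)$ in closed form.

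The first computation is the variance. The normalizing constant is $\sigma\sqrt{2\pi}\,(2\Phi(\alpha)-1)$. Integration by parts, using $z\varphi(z/\sigma)$ as the derivative of $-\sigma^2\varphi(z/\sigma)$, gives
\begin{align*}
\mathrm{Var}(Z)=\sigma^2\Bigl(1-\frac{2\alpha\varphi(\alpha)}{2\Phi(\alpha)-1}\Bigr).
\end{align*}
This is the quantity the statement calls $\sigma_Z$, read as a variance. It is at most $\sigma^2$, and it is also at most $d_{\min}^2/12$, because among symmetric unimodal laws on the interval the uniform law maximizes the variance.

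The second computation is the entropy. Writing $h(Z)=-\E\log p_Z(Z)$ gives
\begin{align*}
h(Z)=\log\bigl(\sqrt{2\pi}\,\sigma(2\Phi(\alpha)-1)\bigr)+\frac{\E[Z^2]}{2\sigma^2}\log e .
\end{align*}
Substituting the variance from the previous step, the two $\tfrac12\log(2\pi e)$ contributions cancel. Hence $\mathsf{N}(Z)=\frac{1}{2\pi e}2^{2h(Z)}$ equals
\begin{align*}
\sigma^2(2\Phi(\alpha)-1)^2\,2^{-\frac{2\alpha\varphi(\alpha)}{2\Phi(\alpha)-1}}.
\end{align*}
This matches the factor in the statement, with the logarithm-base convention absorbed in the same way the paper writes $2^{(\cdot)}$.

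Finally, I substitute both quantities into the bound. Since $\mathrm{Var}(Z)\le\sigma^2$, replacing $-\mathrm{Var}(Z)$ by $-\sigma^2$ only weakens the inequality and gives exactly $\sigma^2[\cdots-1]$. The bound holds for every admissible $\sigma$, so taking the maximum over $\sigma$ gives the corollary. The constraint $\sigma_Z\le d_{\min}^2/12$ is then automatically satisfied.

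The main obstacle is bookkeeping rather than ideas: getting the truncated-Gaussian entropy with consistent logarithm bases, and interpreting the notation $\sigma_Z$ and its constraint correctly. I would check the entropy formula against the limit $\alpha\to\infty$, where $Z$ becomes an untruncated Gaussian and $\mathsf{N}(Z)\to\sigma^2$.
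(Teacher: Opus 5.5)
Your route is the paper's: specialize Theorem~\ref{thm:distort_discrete_case} to truncated-Gaussian smoothing noise and substitute its variance and entropy. On the variance bookkeeping you are cleaner than the paper. The paper calls the truncated law's variance $\sigma^2$ while also using $\sigma$ as the parent scale. Your explicit step $\mathrm{Var}(Z)\le\sigma^2$ is what legitimately yields the $\sigma^2[\cdots-1]$ form. And since your bound holds for every $\sigma>0$, the constraint $\sigma_Z\le d_{\min}^2/12$ is harmless however it is read.

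The step that does not go through is the exponentiation of $h(Z)$. Your own formula for $h(Z)$ is correct, and it gives
$h(Z)=\tfrac12\log\bigl(2\pi e\sigma^2(2\Phi(\alpha)-1)^2\bigr)-\frac{\alpha\varphi(\alpha)}{2\Phi(\alpha)-1}\log e$, hence
\begin{align*}
\mathsf{N}(Z)=\sigma^2(2\Phi(\alpha)-1)^2\exp\Bigl(-\frac{2\alpha\varphi(\alpha)}{2\Phi(\alpha)-1}\Bigr).
\end{align*}
Both $\mathsf{N}(Z)$ and $2^{2H(U)}$ (with $H$ in bits) are independent of the logarithm convention. So no choice of base turns this $\exp$ into $2^{(\cdot)}$, and ``absorbing'' the $\log e$ is not a valid step.

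The discrepancy goes in the wrong direction. For finite $\alpha$ we have $\alpha\varphi(\alpha)>0$, and $2^{-x}>e^{-x}$ for $x>0$. So the printed bound is strictly stronger than what your argument delivers. What you actually prove is the corollary with $e$ in place of $2$ in that factor, equivalently with the exponent multiplied by $\log_2 e$.

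The paper's proof contains the same slip: its expression for $h(Z)$ subtracts a quantity in nats from one in bits. The statement should therefore be read in the corrected form. Your proposed sanity check at $\alpha\to\infty$ cannot detect this, because the exponent vanishes there in both versions. Redoing the exponentiation at a finite $\alpha$ does detect it.
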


\begin{proof}
Let $N$ be a normal random variable with zero mean and variance $\sigma_N$, and define $Z$ as $N$ conditioned on the event $N\in [-d_{\min}/2,d_{\min}/2]$, which is known as the truncated normal distribution over the support $[-d_{\min}/2,d_{\min}/2]$, with zero mean and variance $\sigma^2$, where (see, e.g.,~\cite{del1994singly}),
\begin{align}
    f_Z(z) = \frac{1}{\sigma}\frac{\varphi(\frac{z}{\sigma})}{2\Phi(\alpha)-1}\cdot \mathbbm{1}  \left\{|z|\leq \frac{d_{\min}}{2}\right\},\hspace{2mm}    \sigma_Z=\sigma\left(1-\frac{2\alpha\cdot \varphi(\alpha)}{2\Phi(\alpha) - 1}\right).
\end{align}
The truncated normal distribution is chosen here since it maximizes $h(Z)$ under the constraints of a fixed variance and fixed support (as long as $\sigma_Z \leq \frac{1}{12}d_{\min}^2$). The proof follows by substituting~\cite{johnson1994continuous}
\begin{align}
    h(Z) = \frac{1}{2}\log \left(2\pi e \sigma^2 (2\Phi(\alpha) - 1)^2\right) -  \frac{\alpha\varphi(\alpha)}{2\Phi(\alpha) - 1}.
\end{align}
\end{proof}
\begin{remark}
The lower bound of Theorem~\ref{thm:distort_discrete_case} is tightest when $Z$ is chosen to be the random variable that maximizes the differential entropy under the constraints $Z\in[-\frac{d_{\min}}{2},\frac{d_{\min}}{2}]$ w.p. $1$ and $\Var(Z)=\sigma^2$. The pdf of the optimal $Z$ has the form 
\begin{align}
   f_Z(z)=\frac{e^{\lambda z^2}}{\int_{-\frac{d_{\min}}{2}}^{\frac{d_{\min}}{2}}e^{\lambda z^2}dz} \cdot \mathbbm{1}  \left\{|z|\leq \frac{d_{\min}}{2}\right\}.
\end{align}
It was shown in~\cite{dowson1973maximum} that for $0\leq \sigma^2\leq \frac{1}{12}d_{\min}^2$, $\lambda$ is negative and $Z$ has the truncated Gaussian distribution, whereas for $\frac{1}{12}d_{\min}^2< \sigma^2\leq \frac{1}{4}d_{\min}^2$, $\lambda$ is positive and $Z$ has the so called \textit{truncated U} distribution (note that $\sigma^2>\frac{1}{4}d_{\min}^2$ is unattainable). Due to the cumbersome expressions corresponding to the truncated U distribution, the bound we present here only maximizes over the truncated Gaussian distribution.
\end{remark}

The bound of Theorem~\ref{thm:distort_discrete_case} might be improved by increasing the minimum distance of the alphabet. This can be accomplished by artificially introducing a random variable $A$ that depends on $U$, but such that $(U,A)$ is independent of $Z$ and $U$ can be recovered from $(U+Z,A)$. Let $\mathcal{U}_a\subset \mathcal{U}$ denote the support of $U$ conditioned on $A=a$, and let $Z$ be a continuous random variable supported over an interval of size
\begin{align}
  d_{\min}({P_{A|U}})=\min_{a\in A} d_{\min}(\mathcal{U}_a). 
\end{align}
Note that $Z$ does not dependent on any realization of $(U,A)$, only on the conditional law $P_{A|U}$. The bound derived below provides an improvement whenever one can find a random variable $A$ for which $H(U|A)\approx H(U)$, and also $d_{\min}({P_{A|U}})$ is large.
\begin{theorem}[Improved Rate Distortion]\label{thm:hz_bound}
Assume the random variables $(U,A,Z)$ satisfy
\begin{enumerate}
    \item $(U,A,Z)\sim P_UP_{A|U}P_Z$
    \item $Z\sim f_Z$ is supported on $(-d_{\min}({P_{A|U}})/2,d_{\min}({P_{A|U}})/2)$ and has zero mean and variance $\sigma^2$
    \item $H(U|U+Z,A)=0$
\end{enumerate}
Then it holds that
\begin{align}
D_M^n(U) \geq \sigma^2\left(\frac{2^{2(H(U)-I(U;A))}}{M^2}\cdot\frac{\mathsf{N}(Z)}{\sigma^2}-1\right).
\end{align} 
\end{theorem}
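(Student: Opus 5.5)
The plan is to repeat the chain of Theorem~\ref{thm:distort_discrete_case}, changing only the final entropy computation. In that step, $h(U+Z)=H(U)+h(Z)$ becomes $h(U+Z)\geq H(U)-I(U;A)+h(Z)$, a conditional version that uses the side information $A$.

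Fix a unit-norm $a\in\mathbb{R}^n$ and an $M$-level quantizer $\phi_M$ of $\phi(U^n)$. Let $(U_i,A_i,Z_i)$ be i.i.d.\ copies of $(U,A,Z)$, with the $Z_i$ independent of $(U^n,A^n)$. The steps \eqref{eq:DFI} through \eqref{eq:disto} go through verbatim. They only use three facts: the Markov chain $\phi_M - \phi(U^n) - \phi(U^n)+\sum a_iZ_i$, which holds because the $Z_i$ are independent of $U^n$; the Gaussian maximum-entropy bound; and $\E(\sum a_iZ_i)^2=\sigma^2$, which holds because the $Z_i$ are zero mean, independent, and independent of the quantization error. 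This yields
\begin{align*}
\log M\geq h\Bigl(\sum_{i=1}^n a_i(U_i+Z_i)\Bigr)-\tfrac12\log\bigl(2\pi e(D_M^n(U)+\sigma^2)\bigr).
\end{align*}
Next, Lemma~\ref{lem:EPI} applies because the $U_i+Z_i$ are independent and have densities. It gives $h(\sum a_i(U_i+Z_i))\geq h(U+Z)$.

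The key new step is to lower bound $h(U+Z)$. Write
\begin{align*}
h(U+Z)\geq h(U+Z|A)=I(U;U+Z|A)+h(U+Z|U,A).
\end{align*}
For the second term, $Z$ is independent of $(U,A)$, so $h(U+Z|U,A)=h(Z)$. For the first term, condition (3) gives $I(U;U+Z|A)=H(U|A)-H(U|U+Z,A)=H(U|A)=H(U)-I(U;A)$. Hence $h(U+Z)\geq H(U)-I(U;A)+h(Z)$.

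Substituting into the chain gives
\begin{align*}
\log M\geq H(U)-I(U;A)+h(Z)-\tfrac12\log\bigl(2\pi e(D_M^n(U)+\sigma^2)\bigr).
\end{align*}
Exponentiating and using $2^{2h(Z)}=2\pi e\,\mathsf{N}(Z)$, this rearranges to
\begin{align*}
D_M^n(U)+\sigma^2\geq \mathsf{N}(Z)\,2^{2(H(U)-I(U;A))}/M^2,
\end{align*}
which is the claim. Since the final bound holds for every admissible $a$ and quantizer, it holds for the infimum defining $D_M^n(U)$.

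The main point needing care is the mixed discrete–continuous conditional entropy step. One must justify that $h(U+Z|A)$ is well defined, that conditioning reduces it, and that $I(U;U+Z|A)=H(U|A)$ follows from $H(U|U+Z,A)=0$. I would handle this by working conditionally on each $A=a$. Given $A=a$, $U$ is supported on $\mathcal{U}_a$, and the support of $Z$ is narrower than $d_{\min}(\mathcal{U}_a)$. So $U+Z$ given $A=a$ has the density $\sum_u P(u|a)f_Z(\cdot-u)$, whose shifted pieces have disjoint supports. This gives $h(U+Z|A=a)=H(U|A=a)+h(Z)$ directly, and averaging over $a$ gives the identity. Condition (2) exists precisely to make this work; condition (3) is then automatically satisfied.
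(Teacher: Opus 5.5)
Your proposal is correct and follows the paper's proof. Like the paper, you reuse the chain of Theorem~\ref{thm:distort_discrete_case} through \eqref{eq:EPIClaim}, then bound $h(U+Z)\geq h(U+Z|A)=H(U|A)-H(U|U+Z,A)+h(Z)=H(U)-I(U;A)+h(Z)$. Your per-$a$ disjoint-support argument, and your observation that condition (3) already follows from (1)--(2), usefully tighten the paper's one-line justification, which writes $h(Z|A)$ where $h(U+Z|U,A)=h(Z)$ is meant.
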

\begin{proof}
  The proof follows immediately from eq.~\eqref{eq:EPIClaim} by writing 
\begin{align}
  h(U+Z)&\geq h(U+Z|A)\\&=I(U;U+Z|A)+h(Z|A)\\&=H(U|A)-H(U|U+Z,A)+h(Z|A)\\&=H(U)-I(U;A)+h(Z)\label{eq:last},
\end{align}
where in~\eqref{eq:last} we used the fact that $H(U|U+Z,A)=0$.   
\end{proof}
\begin{example}
  Consider the case of $U \sim \mathrm{Unif}([K]\cup{K+\eps})$ for some $0<\eps<1$. The minimal distance of the symbol space is $\eps$ and is dictated by the symbol ${K+\eps}$. Thus, if we define $A=\mathds{1}(U=K+\eps)$ we get that $\mathcal{U}_0=[K]$, and $\mathcal{U}_1=\{K+\eps\}$, so that $d_{\min}^A=1$. 
  Taking $Z$ to be uniform over $(-d_{\min}/2,d_{\min}/2)$ independently of $(U,A)$, and noting that $U$ can be recovered from $U+Z$ if $A$ is also known, we have 
    \begin{align}
D_M^n(U) \geq \frac{1}{12}
\left(\frac{12}{2\pi e}\cdot\frac{2^{2H(U|A)}}{ M^2}-1\right)=\frac{1}{12}
\left(\frac{12}{2\pi e}\cdot\frac{(K^2)^{\frac{K}{K+1}}}{ M^2}-1\right)
\end{align}
as $H(U|A)=\frac{K}{K+1}\log K$, while Corollary~\ref{cor:unif} gives us 
    \begin{align}
D_M^n(U) \geq \frac{\eps^2}{12}
\left(\frac{12}{2\pi e}\cdot\frac{(K+1)^2}{ M^2}-1\right).
\end{align}
Thus, for constant $K$, we improve over Corollary~\ref{cor:unif} by a factor of $O(1/\eps^2)$.   
\end{example}
 
\section{MSE Lower Bounds on quantization of linear combinations of i.i.d. random variables - \\An anti-concentration approach}\label{sec:anticont}
In this section we formalize the intuition that if the probability of any union of $M$ small intervals is small, then any $M$-level quantizer must have large MSE, i.e., a large \textit{anti-concentration} implies a large distortion in $M$-level quantization.
We are thus interested in anti-concentration inequalities for random variables, that upper bound the largest probability of an interval of given length. Classical results from probability theory relate upper bounds on the anti-concentration function of sums of i.i.d. random variables to the anti-concentration function of a single random variable. As $D_M^n(U)$ measures how well a linear combination of $n$ random i.i.d. copies of $U$ can be quantized to $M$ levels, we can leverage these results for obtaining lower bounds on $D_M^n(U)$.

 Define the \textit{concentration function} of an r.v. $U$ as 
\begin{align*}
Q(U,t) \triangleq \sup_{a\in\mathbb{R}}\Pr \left(a \leq U \leq a+t \right), \quad t \geq 0.    
\end{align*}
It is easy to see that $t \mapsto Q(U,t)$ is non-decreasing and that  $Q(\alpha U,t)=Q(U,t/\alpha)$ for any $\alpha > 0$. 
%and $Q(X,\alpha t) \leq (1+\lfloor \alpha \rfloor) Q(X,t)$ 
Furthermore,~\cite{petrov2012sums} proved that if $U$ and $Z$ are independent, then
\begin{align}\label{eq:lemmaMinimumConcentration}
    Q(U+Z,t) \leq \min \{Q(U,t), Q(Z,t)\}.
\end{align}
As we are particularly interested in anti-concentration inequalities for sums of i.i.d. r.v.s, we define the \textit{$n$-fold concentration function} of $U$ to be 
    \begin{align}
        A_n(U,t) \triangleq \sup_{a\in \mathbb{R}^n: \|a\|_2 = 1} Q\left(\sum_{i=1}^n a_i U_i, t \right),\label{eq:OfersFunc_n}
    \end{align}
    where $U_1,\ldots,U_n$ are i.i.d copies of $U$. Next, we define the \textit{asymptotic concentration function} of $U$ to be 
    \begin{align}
        A(U,t) \triangleq \lim_{n \to \infty} A_n(U,t),\label{eq:OfersFunc}
    \end{align}
where the limit exists since $A_n(U,t)$ is a bounded non-decreasing function of $n$.
%%%%%%%%%%%%%%%%%%%%%%%%%%%%%%%%%%%
%      Ruldelson-Vershynin        %
%%%%%%%%%%%%%%%%%%%%%%%%%%%%%%%%%%%
Upper-bounding $A_n(U,t)$ is closely related to the Littlewood-Offord problem \cite{je1943number,erdos1945lemma}, which is concerned with the maximal possible value of $Q(\sum \alpha_i U_i,t)$ for $t = 0$, where $U_i$ are i.i.d $\mathrm{Ber}(1/2)$ and $|\alpha_i|\geq 1$. This problem and its variations have been extensively studied in additive combinatorics~\cite{tao2006additive}, however here we are mostly interested in the moderate $t$ regime. Furthermore, several works considered bounding $A(U,t)$ as a function of $Q(U_i,t)$ and related quantities, e.g., by Kolmogorov-Rogozin~\cite{kolmogorov1956two,rogozin1961increase} and Esseen~\cite{esseen1968concentration}. One such bound is due to \cite[Corollary 1.4]{rudelson2015small} and yields
\begin{align}
    A_n(U,t) \leq C_1 Q(U,t/2),\label{eq:RudeVer}
\end{align}
where $C_1 = \frac{12}{11}4\sqrt{2}$. For the purpose of this paper, we will rely both on \eqref{eq:RudeVer}, as well as on a bound of a different flavor that we establish based on the work of~\cite{esseen1968concentration}. Let $U^* = U - U'$ be the symmetrized version of $U$, where $U'$ is an independent copy of $U$. For any $t>0$, define 
\begin{align}
    E(U,t) \triangleq \E (\min\left\{U^2/t^2, 1\right)\}),\label{eq:defTruncatedVarD}
\end{align}
where $E(U,0)=\Pr(U \neq 0)$. The following lemma was proved in~\cite{esseen1968concentration}.
\begin{lemma}\label{lem:Petrov}
    Let $S = U_1 + \ldots + U_n$ for independent $U_1,\ldots,U_n$. 
    Then for any $0 < t_1,\ldots,t_n < t$, it holds that 
    \begin{align*}
        Q(S,t) \leq  \frac{Ct}{\sqrt{\sum_{i=1}^n t_i^2E(U^*_i;t_i)}},
    \end{align*}
    where $C=(96/95)^2 \sqrt{48\pi/11}$ and $U^*$ is the symmetrized r.v.~corresponding to $U$.
\end{lemma}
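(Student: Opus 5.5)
The plan is the classical Fourier-analytic route of Esseen. Since the lemma is quoted from~\cite{esseen1968concentration}, I only sketch how one would reprove it, without tracking the constant $C$ exactly.

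\emph{Step 1 (smoothing inequality).} I would start from Esseen's inequality
\[
Q(S,t)\le c_0\, t\int_{|u|\le 1/t}|\psi_S(u)|\,du ,
\]
where $\psi_S$ is the characteristic function of $S$. It is proved by convolving the law of $S$ with a kernel whose Fourier transform is supported on $[-1/t,1/t]$, for example a Fejér-type density. By independence, $\psi_S=\prod_i\psi_{U_i}$. Moreover $|\psi_{U_i}(u)|^2=\E\cos(uU_i^*)$, because $|\psi_{U_i}|^2$ is the characteristic function of the symmetrized variable. Using $x\le e^{-(1-x^2)/2}$ for $x\in[0,1]$, we get
\[
|\psi_S(u)|\le \exp\Bigl(-\tfrac12 h(u)\Bigr), \qquad h(u)\triangleq\sum_{i=1}^n g_i(u),\qquad g_i(u)\triangleq\E\bigl[1-\cos(uU_i^*)\bigr].
\]

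\emph{Step 2 (lower bound on the exponent, on average).} I would show that the average of $h$ over a window of length of order $1/t$ is at least $c_1 t^{-2}\sum_i t_i^2E(U_i^*,t_i)$. I would split each $g_i$ according to whether $|U_i^*|\le t_i$ or $|U_i^*|>t_i$.
\begin{itemize}
\item On $\{|U_i^*|\le t_i\}$ and $|u|\le 1/t$, we have $|uU_i^*|\le t_i/t<1$. There $1-\cos x\ge c x^2$, which contributes at least $c\,u^2\,\E[U_i^{*2}\mathds{1}\{|U_i^*|\le t_i\}]$.
\item On $\{|U_i^*|>t_i\}$ no pointwise bound is available, since $\cos(uU_i^*)$ may be close to $1$ for particular $u$. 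However, for each fixed value $v$ with $|v|>t_i$, the average of $1-\cos(uv)$ over $u$ in a suitable window is at least $c\,t_i^2/t^2$. This is elementary: $\frac1L\int_0^L(1-\cos(uv))\,du=1-\frac{\sin(Lv)}{Lv}$, which is bounded below in terms of $\min\{L^2v^2,1\}$.
\end{itemize}
Combining the two parts reproduces exactly the truncated quantity $t_i^2\,\E\min\{U_i^{*2}/t_i^2,1\}=t_i^2E(U_i^*,t_i)$, scaled by $t^{-2}$.

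\emph{Step 3 (from average to integral; the main obstacle).} Only an averaged lower bound on $h$ is available, not a pointwise one, so one cannot simply plug it into the Gaussian integral. I would use the structure of $h$. Since $\sqrt{1-\cos x}=\sqrt2|\sin(x/2)|$ is subadditive, Minkowski's inequality in $L^2$ shows that $\sqrt{h}$ is subadditive: $\sqrt{h(u+v)}\le\sqrt{h(u)}+\sqrt{h(v)}$. Hence $h(ku)\le k^2h(u)$ for integers $k$, and the sublevel sets $\{u:h(u)\le x\}$ behave additively.
\begin{itemize}
\item Using subadditivity together with the averaged bound of Step 2, I would prove a level-set estimate: $\lambda\{|u|\le 1/t: h(u)\le x\}\le c_2\sqrt{x}/\sqrt{\sum_i t_i^2E(U_i^*,t_i)}$, for $x$ up to the scale of the average.
\item I would then integrate by the layer-cake formula, $\int e^{-h/2}=\int_0^\infty \tfrac12 e^{-x/2}\lambda\{h\le x\}\,dx$. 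This yields $\int_{|u|\le1/t}|\psi_S|\le c_3/\sqrt{\sum_i t_i^2E(U_i^*,t_i)}$. Multiplying by $c_0t$ gives the claim with $C=c_0c_3$.
\end{itemize}

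The delicate point is the level-set estimate. Its proof is the Kolmogorov–Rogozin-type doubling argument: if the sublevel set had large measure, its sumsets would cover a large portion of the window while keeping $h$ at most $4x$, contradicting the lower bound on the average of $h$.
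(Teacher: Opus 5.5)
The paper gives no proof of this lemma; it quotes it from Esseen (1968), so your sketch has to stand on its own. Steps 1 and 2 are correct. The gap is in Step 3, which is where the tail contribution $t_i^2\Pr(|U_i^*|>t_i)$ must be paid for. There, the level-set estimate $\mathrm{Leb}\{|u|\le 1/t:\ h(u)\le x\}\le c_2\sqrt{x/\Sigma}$, with $\Sigma=\sum_i t_i^2E(U_i^*;t_i)$, is asserted, and your one-line doubling justification does not establish it on $\mathbb{R}$.

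\emph{Why the sumset argument fails.} Write $E_x=\{h\le x\}$ and $A=E_x\cap[-1/t,1/t]$. Subadditivity of $\sqrt h$ gives $E_x+E_x\subset E_{4x}$. However, $A+A$ lies in $[-2/t,2/t]$, and the only general bound, $\mathrm{Leb}(A+A)\ge 2\,\mathrm{Leb}(A)$, leaves the density unchanged. So Brunn--Minkowski alone never forces a sublevel set to fill the window. You could localize $A$ to $[-1/(kt),1/(kt)]$ so that $k$-fold sums stay inside. But that can cost a factor $k$ in measure, for example when $A$ is spread periodically, which is exactly the tail-dominated case. This cancels what Brunn--Minkowski gains. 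The density-increasing inequality $\mathrm{Leb}(A+B)\ge\min\{\mathrm{Leb}(A)+\mathrm{Leb}(B),\text{period}\}$ used in Hal\'asz-type arguments needs a compact group, and $h$ is not periodic.

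\emph{How to repair it.} The estimate is nevertheless true if you use convolution powers instead of sumsets. Let $\nu$ be the uniform law on $A$; it is symmetric since $h$ is even. Set $\ell=\mathrm{Leb}(A)$ and $\hat\nu(v)=\int\cos(uv)\,d\nu(u)$.
\begin{itemize}
\item Since $1-y^{2k}\le 2k(1-y)$ on $[-1,1]$, we get $\int h\,d\nu^{*2k}=\sum_i\E\bigl[1-\hat\nu(U_i^*)^{2k}\bigr]\le 2k\int h\,d\nu\le 2kx$.
\item The density bound $1/\ell$ gives $1-\hat\nu(v)\ge c\,\ell^2v^2$ for $|v|\le t$, and $|\hat\nu(v)|\le 1-c(\ell t)^2$ for $|v|>t$.
\item Taking $k\asymp(\ell t)^{-2}$ then yields $\sum_i\E\min\{U_i^{*2}/t^2,1\}\lesssim x/(\ell t)^2$, that is, $\ell\lesssim\sqrt{x/\Sigma}$ for every $x$.
\end{itemize}
After this, your layer-cake step goes through.

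Second, the lemma fixes $C=(96/95)^2\sqrt{48\pi/11}$, and the paper plugs this value into its numerical examples. So ``some absolute constant'' does not prove the statement as written. That value is exactly what the pointwise route produces. The factor $(96/95)^2$ is the smoothing constant. The bound $1-\cos x\ge\frac{11}{24}x^2$ for $|x|\le 1$ gives $|\psi_S(u)|\le\exp\bigl(-\frac{11}{48}u^2\sigma^2\bigr)$ on $|u|\le 1/t$, with $\sigma^2=\sum_i\E\bigl[U_i^{*2}\mathds{1}\{|U_i^*|<t_i\}\bigr]$. Then $\int_{\mathbb{R}}\exp\bigl(-\frac{11}{48}u^2\sigma^2\bigr)\,du=\sqrt{48\pi/11}/\sigma$. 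This argument controls only the truncated-moment part of $t_i^2E(U_i^*;t_i)$. The tail part is what forces your Step 3, and its constants have no reason to reproduce $\sqrt{48\pi/11}$. So even after the repair, your route gives the inequality with some other absolute constant, not the one stated.
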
 %/Esseen
Lemma~\ref{lemma:DIMINDEP_A} gives upper and lower bounds on the concentration function based on $A\left(U,t_\tau \right)$, for $t_\tau$ defined below. The lower bound follows since $A_n\left(U,t_\tau \right)$ is non-decreasing in $n$, while the upper bound is a result of Lemma~\ref{lem:Petrov} (see Appendix~\ref{app:proof_lemma5}).
\begin{lemma}\label{lemma:DIMINDEP_A}
Let $t_\tau \triangleq Q(U,\tau) \sqrt{E(U^*;\tau)}C^{-1}\tau$. Then for any $n$,
\begin{align}
    A_n\left(U,t_\tau \right) \leq A\left(U,t_\tau \right) \leq Q(U,\tau).\label{eq:Lemma1}
\end{align}
\end{lemma}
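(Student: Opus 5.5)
The first inequality, $A_n(U,t_\tau)\le A(U,t_\tau)$, follows from the definition~\eqref{eq:OfersFunc}. The sequence $A_n(U,t)$ is non-decreasing in $n$: any unit vector $a\in\mathbb{R}^n$ can be padded with a zero coordinate to give a unit vector in $\mathbb{R}^{n+1}$ with the same linear combination. Hence each $A_n$ is at most the limit. The real work is the second inequality, and for it I would show $A_n(U,t_\tau)\le Q(U,\tau)$ for every $n$ and every unit vector $a$, and then take the supremum and the limit.

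Fix $n$ and $a$ with $\|a\|_2=1$, and set $S=\sum_i a_iU_i$. I would apply Lemma~\ref{lem:Petrov} to the independent summands $a_iU_i$, taking $t=t_\tau$, and split into two cases.

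\textbf{Case 1: some $|a_i|$ is large.} Suppose $|a_j|\ge Q(U,\tau)\sqrt{E(U^*;\tau)}/C$ for some $j$. Then $t_\tau/|a_j|\le\tau$. I would use~\eqref{eq:lemmaMinimumConcentration}, the scaling rule and monotonicity in $t$ to get
\[
Q(S,t_\tau)\le Q(a_jU_j,t_\tau)=Q(U,t_\tau/|a_j|)\le Q(U,\tau).
\]
Since this only bounds the concentration of a single summand, it needs no restriction on $t_i<t$.

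\textbf{Case 2: all $|a_i|$ are small.} Suppose every $|a_i|$ is smaller than that threshold. Then $t_i\triangleq|a_i|\tau<t_\tau$ is admissible in Lemma~\ref{lem:Petrov}. The symmetrization of $a_iU_i$ is $a_iU_i^*$, and
\[
E(a_iU_i^*;|a_i|\tau)=\E\min\{U^{*2}/\tau^2,1\}=E(U^*;\tau).
\]
Therefore $\sum_i t_i^2E(a_iU^*_i;t_i)=\tau^2E(U^*;\tau)\sum_i a_i^2=\tau^2E(U^*;\tau)$. Lemma~\ref{lem:Petrov} then gives
\[
Q(S,t_\tau)\le \frac{Ct_\tau}{\tau\sqrt{E(U^*;\tau)}}=Q(U,\tau),
\]
by the definition of $t_\tau$. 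Coordinates with $a_i=0$ can simply be dropped.

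Combining the two cases gives the bound uniformly in $a$ and $n$, and passing to the limit gives $A(U,t_\tau)\le Q(U,\tau)$. The main obstacle is matching the constraint $t_i<t$ in Lemma~\ref{lem:Petrov}: the case split exists precisely to handle large coefficients. Two boundary points also need care. One is the case of equality $|a_i|\tau=t_\tau$, which I would resolve by sending it to Case 1. The other is the degenerate situation $E(U^*;\tau)=0$ (that is, $U$ is constant), where $t_\tau=0$ and the claim is trivial.
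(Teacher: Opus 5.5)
Your proposal is correct and follows the paper's appendix proof essentially step for step: the same threshold $c=C^{-1}Q(U,\tau)\sqrt{E(U^*;\tau)}$, the same choice $t_i=|a_i|\tau$ when all $|a_i|<c$ (the paper's $t|a_i|/c$ evaluated at $t=t_\tau=c\tau$), and the same fallback to \eqref{eq:lemmaMinimumConcentration} on a single large coordinate otherwise. Your extra handling of zero coefficients (needed because Lemma~\ref{lem:Petrov} requires $t_i>0$) and of the degenerate case $E(U^*;\tau)=0$ is a small tidying that the paper omits.
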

\begin{theorem}\label{thm:Dbound3}
It holds that
\begin{align}
    D_M^n(U)\geq \max\{d_1(U),d_2(U)\},\label{eq:LinDistBound}
\end{align}
where
\begin{align}
    d_1(U)&=\max_t t^2\left|1 - M C_1 \cdot Q(U,t) \right|_+,\label{eq:anti_cont_bound}\\d_2(U)&=\max_t \frac{t^2}{4C^2}Q^2(U,t)E(U^*,t)|1-M Q(U,t)|_+.
\end{align}
\end{theorem}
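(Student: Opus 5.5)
The plan is to reduce the distortion lower bound to an anti-concentration statement for the single random variable $S=\sum_i a_i U_i$ with $\|a\|_2=1$. Then $d_1$ follows from~\eqref{eq:RudeVer} and $d_2$ from Lemma~\ref{lemma:DIMINDEP_A}.

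\textbf{Key deterministic step.} Fix any $M$-level quantizer $f$ with levels $c_1,\ldots,c_M$ and any unit vector $a$. For any $t>0$, consider the union of intervals
\[
B=\bigcup_{j=1}^M [c_j-t,\,c_j+t].
\]
On the complement of $B$ we have $|f(U^n)-S|\geq t$ deterministically, since $f$ always takes some value $c_j$. Hence
\[
\E(f-S)^2\geq t^2\bigl(1-\Pr(S\in B)\bigr).
\]
A union bound gives $\Pr(S\in B)\leq M\, Q(S,2t)\leq M A_n(U,2t)$. This yields the generic inequality
\[
\E(f-S)^2 \geq t^2\,|1-M A_n(U,2t)|_+ .
\]
Because it holds for every $f$ and every $a$, it also holds for $D_M^n(U)$.

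\textbf{Deriving $d_1$.} Apply~\eqref{eq:RudeVer} in the form $A_n(U,2t)\leq C_1 Q(U,t)$. This gives exactly $t^2|1-MC_1Q(U,t)|_+$, and maximizing over $t$ yields $d_1(U)$. I need to make sure the factor of $2$ lines up: intervals of half-width $t$ have length $2t$, and~\eqref{eq:RudeVer} halves the length.

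\textbf{Deriving $d_2$.} Apply Lemma~\ref{lemma:DIMINDEP_A} with $\tau=t$. Set $s=t_\tau = Q(U,t)\sqrt{E(U^*;t)}\,t/C$, so that $A_n(U,s)\leq Q(U,t)$. Use the generic bound with intervals of total length $s$, i.e. half-width $s/2$:
\[
D_M^n(U)\geq \frac{s^2}{4}\,|1-MQ(U,t)|_+ = \frac{t^2}{4C^2}\,Q^2(U,t)\,E(U^*,t)\,|1-MQ(U,t)|_+ ,
\]
which is exactly $d_2$. Maximizing over $t$ and combining with the first bound gives the maximum in~\eqref{eq:LinDistBound}.

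\textbf{Main obstacle.} The main risk is getting the interval-length conventions consistent between the definition of $Q$, inequality~\eqref{eq:RudeVer}, and $t_\tau$. This matters because the constants in $d_1$ and $d_2$ (no factor in $d_1$, a factor $1/4$ in $d_2$) depend on it. If the $d_1$ constant does not match with half-width $t$, I would instead use width-$t$ intervals together with $Q(U,\cdot)$ monotonicity, and adjust accordingly.
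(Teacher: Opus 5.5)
Your proposal is correct and follows the paper's argument: a Markov-type bound on the event of being far from every level, a union bound over $M$ intervals each controlled by $Q(S,\cdot)\leq A_n(U,\cdot)$, and then \eqref{eq:RudeVer} for $d_1$ and Lemma~\ref{lemma:DIMINDEP_A} (at $t_\tau$) for $d_2$. The only difference is cosmetic: you parametrize the intervals by half-width $t$ rather than full width $t$. With that choice $d_1$ comes out directly instead of after the substitution $t\to 2t$, and your constant checks for both $d_1$ and $d_2$ are right.
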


\begin{proof}
Recall that $\phi(U^n) =\sum_{i=1}^n a_i U_i$ with $||a||_2^2=1$ and let $\phi_M$ be an $M$ level quantizer of $\phi(U^n)$ with output levels $m_1,\ldots,m_M$. Write
\begin{align}
\E (\phi(U^n) - \phi_M)^2   & \geq \Pr \left(|\phi
(U^n) - \phi_M| > t/2 \right)\cdot \left(t/2\right)^2\label{eq:Chebi}\\               & = \left(1-\Pr \left(|\phi(U^n) - \phi_M| \leq  t/2 \right)\right)\cdot t^2/4\\
& =\left( 1- \Pr \left(\phi(U^n) \in \bigcup\limits_{i=1}^{M} (m_i- t/2,m_i+ t/2) \right) \right) t^2/4
\\
& \geq \left( 1- \sum_{i=1}^M \Pr \left(\phi(U^n) \in (m_i- t/2,m_i+t/2) \right) \right)\cdot t^2/4\label{eq:union}\\
& \geq \left( 1 - M \max_{\alpha} \Pr \left(\phi(U^n) \in (\alpha,\alpha+t)\right) \right)\cdot t^2/4\\
& = \left(1 - M Q(\phi(U^n),t)\right)\cdot t^2/4,\label{eq:Qdef}\\
    &\geq \left(1 - M \cdot A_n(U,t) \right)t^2/4\label{eq:Adef}
\end{align}
where~\eqref{eq:Chebi} follows from Chebyshev's inequality,~\eqref{eq:union} follows from the union bound,~\eqref{eq:Qdef} from the definition of $Q(U,t)$ and~\eqref{eq:Adef} from the definition of $A_n(U,t)$. %Recall $Q(\alpha U,t)=Q(U,t/\alpha)$  write $t = ||a||_2\tau=\tau$. Then
The lower bound of $d_1(U)$ is attained by lower bounding~\eqref{eq:Adef} with~\eqref{eq:RudeVer}, and the lower bound of $d_2(U)$ is attained by lower bounding~\eqref{eq:Adef} using Lemma~\ref{lemma:DIMINDEP_A} (and then maximizing over $t)$.
\end{proof}
\section{Lower bounds examples}\label{sec:examples}
%This section is dedicated to a quantitative comparison of the lower bounds of sections~\ref{sec:ratedist} and~\ref{sec:anticont}, showing that no one is dominated by the other, but rather for different distributions and regimes, different approaches for lower bounds will be tight. Intuitively, one would expect that when the underlying distribution is well spaced out, the anti concentration bound might be tighter than the rate distortion bound, and vice versa. Below we translate this intuition into specific examples, both continuous and discrete, for which different bounds are preferable. 
In this section, we quantitatively compare the lower bounds developed in Sections~\ref{sec:ratedist} and~\ref{sec:anticont}. We show that neither bound dominates the other: depending on the underlying distribution, either approach may yield the tighter lower bound. Roughly speaking, the anti-concentration bound is expected to perform better for distributions whose mass is sufficiently spread out, while the rate-distortion bound is expected to be stronger for more concentrated distributions. We make this intuition precise by exhibiting both continuous and discrete examples in which each of the approaches is respectively tighter.
\subsection{Rate distortion beats anti-concentration}
\begin{example}[Continuous case]
 Let $U\sim \mathcal{N}(0,1)$. Then for $M=2$, the rate distortion bound of Theorem~\ref{thm:Rate_dist_bound} is $D_2^n(U)\geq 0.25$,
 while the anti concentration bound of Theorem~\ref{thm:Dbound3} only gives    $D_2^n(U)\geq c$,
 for some $c$ smaller than $0.0136$. This should be juxtaposed with the ground truth of $D_2^n(U)=1-\frac{2}{\pi}$, which is achieved by taking $f(U)=\sign(U)$ and reconstructing the conditional mean of $U$ given $f(U)$.
% \Or{For Gaussian case with $M=2$ we have $D_2^n(U)=1-\frac{2}{\pi}$. We teach in in IT, right? Add this to the example.}
\end{example}
In this example, the rate distortion bound is tighter as the Gaussian distribution is too concentrated for the anti-concentration approach to perform well. The rate distortion lower bound is $$D_M^n(U) \geq \frac{\mathsf{N}(U)}{M^2}=\frac{1}{M^2}$$ as the entropy power of a standard Gaussian r.v. is $1$. 
For the anti-concentration bound, we first evaluate $$d_1(U)=\max_t t^2\left|1 - M C_1 \cdot Q(U,t) \right|_+.$$ As the standard Gaussian distribution is symmetric with monotonically decreasing pdf for $u\geq 0$, the maximum of the concentration function $Q(U,t)=\sup_{a\in\mathbb{R}}\Pr \left(a \leq U \leq a+t \right)$ for any $t>0$ is achieved for $a=-\frac{t}{2}$.  This implies that 
$$Q(U,t)=\Phi\left(\frac{t}{2}\right)-\Phi\left(-\frac{t}{2}\right)=2\Phi\left(\frac{t}{2}\right)-1.$$ which in turn implies that $Q'(U,t)=\varphi\left(\frac{t}{2}\right)$. To evaluate $d_1(U)$, we maximize over $g(t)=t^2(1 - M C_1  Q(U,t))$:
\begin{align}
   g'(t)=2t(1 -M C_1 Q(U,t))-M C_1t^2Q'(U,t)&=0\\2 -2M C_1 \left(2\Phi\left(\frac{t^*}{2}\right)-1\right)- MC_1t^*\cdot\varphi\left(\frac{t^*}{2}\right)&=0\\ t^*\cdot\varphi\left(\frac{t^*}{2}\right)+4\Phi\left(\frac{t^*}{2}\right)-2\left(1+\frac{1}{MC_1}\right)&=0.
\end{align}
For $M=2$, a numeric computation gives $t^*\approx 0.14$ and $d_1(U)=g(t^*)\approx 0.0061$. Note that
$$d_2(U)=\max_t \frac{t^2}{4C^2}Q^2(U,t)E(U^*,t)|1-M Q(U,t)|_+$$
is also loose, as $E(U^*,t)\leq 1,Q^2(U,t)\leq 1$, and also $|1-2 Q(U,t)|_+\leq 1$ and is strictly positive only when $t\leq 1.4$, implying that  $\frac{t^2}{4C^2}\leq \frac{1.4^2}{4C^2}=0.0136$. Generally, the Lloyd-Max quantizer achieves the optimum for any $M$ in the Gaussian case, as it does for any log-concave distribution~\cite{fleischer1964sufficient}.
\begin{example}[Discrete case]
  Let 
\begin{align}
  \Pr(U=u)=\begin{cases}
\frac{1}{2},& u=0,\\
     \frac{1}{2L},& u=\pm d\cdot k \text{ and } 1\leq k \leq \frac{L}{2},  
\end{cases}  
\end{align}
where $d>0$ is chosen such that $\Var(U)=1$, that is, $d^2=\frac{12}{(L+1)(L/2+1)}$. Then for any $M$ and $L>\frac{\pi e M^2}{24}$, the rate distortion bound of Corollary~\ref{cor:unif} is $$D_M^n(U)\geq \frac{1}{(L+1)(L/2+1)}\left(\frac{24L}{\pi eM^2}-1\right),$$
 while the anti concentration only gives $D_M^n(U)\geq 0$. 
\end{example}
This is a highly concentrated discrete distribution where one dominant element has probability $1/2$, and $L$ other uniformly distributed values have probability $1/2L$. Note that $Q(U,t)\geq 1/2$ for any $t\geq 0$, implying that both $d_1(U)=0$ and $d_2(U)=0$ for any $M>1$, thus the anti-concentration lower bound is trivially zero. However, the rate distortion lower bound of Corollary~\ref{cor:unif} is strictly positive whenever $L>\frac{\pi e M^2}{24}$. To see this, note that $H(U)=\frac{1}{2}\log (2L)+\frac{1}{2}\log 2=\frac{1}{2}\log L+1$, implying that
$$\frac{12}{2\pi e}\cdot \frac{2^{2H(U)}}{M^2}-1=\frac{24L}{\pi eM^2}-1.$$
\subsection{Anti-concentration beats rate distortion}
\begin{example}[Continuous case]
   Let $X$ be a discrete r.v. uniformly distributed over a $L$-PAM constellation with $L=20$ and symbol distance
   \begin{align}
    d=\sqrt{\frac{12}{L^2-1}}\cdot \sqrt{1-\varepsilon}= 0.173\sqrt{1-\varepsilon}. \label{eq:sym_dis}  
   \end{align} That is, $$X\in \left\{-(L-1)\frac{d}{2},\ -(L-3)\frac{d}{2},\ \ldots,\ (L-3)\frac{d}{2},\ (L-1)\frac{d}{2}\right\},$$
   and has expectation zero and variance $1-\varepsilon$. Further let $Z\sim \mathcal{N}(0,\varepsilon)$ and define $U=X+Z$. Then for $M=2$ and $\varepsilon$ small enough, the rate distortion bound of Theorem~\ref{thm:Rate_dist_bound} is $D_2^n(U)\geq \Omega(\eps)$,
 while the anti concentration bound gives $D_2^n(U)\geq 0.01$ independent of $\eps$. 
\end{example}
Here $U$ is a continuous random variable with zero mean and unit norm, whose distribution has both low concentration, which makes the anti-concentration bound large, and low entropy power, which makes the rate distortion bound small. To evaluate the rate distortion lower bound, first note that $h(U)-h(U|X)=I(X;U)\leq H(X)$, thus 
\begin{align*}
  h(U)\leq H(X)+h(U|X)=\log(20)+\frac{1}{2}\log(2\pi e\varepsilon) =\frac{1}{2}\log(400\cdot 2\pi e\varepsilon).  
\end{align*}
Theorem~\ref{thm:Rate_dist_bound} then implies the lower bound $\frac{\mathsf{N}(U)}{2^2}=\frac{2^{2h(U)}}{8\pi e}\leq 100\varepsilon$, which can be made arbitrarily close to zero. 
To evaluate the anti concentration bound, note that the density function of $U$ is a convolution between a uniform discrete distribution over $20$ equispaced symbols and a narrow Gaussian envelope. This implies that $Q(U,d)= 1/20-\delta(\varepsilon)$ for any $t>0$, where $\delta(\varepsilon)\rightarrow 0$ as $\varepsilon\rightarrow 0$. Write
\begin{align}
    D_2^n(U)&\geq  \max\{d_1(U),d_2(U)\}\\&\geq\max_t t^2(1 - 2 C_1 \cdot Q(U,t))\label{eq:d_1bound}\\ &\geq d^2(1 - 12.342 \cdot Q(U,d))\label{eq:d_is_t}\\ &= d^2(1 - 12.342 \cdot (1/20-\delta(\varepsilon))\\&\geq 0.01\label{eq:ddef},
\end{align}
where eq.~\eqref{eq:d_1bound} follows since $d_1(U)\leq \max\{d_1(U),d_2(U)\}$, eq.~\eqref{eq:d_is_t} follows by setting $t=d$, and eq.~\eqref{eq:ddef} from substituting eq.~\eqref{eq:sym_dis} for $\varepsilon\ll 1$. This bound is better than the rate-distortion bound as it is independent of $\varepsilon$ (for $\varepsilon$ small enough).  
\subsection{Anti concentration beats rate-distortion but loses to improved rate distortion bound}
\begin{example}[Discrete case]
  Let $X$ be defined as in previous example where $L$ is some constant, that is, $X$ is an $L$-PAM constellation with symbol distance  $$d=\sqrt{\frac{12}{L^2-1}}\cdot\sqrt{1-\eps}.$$ Further let $Y$ be a binary r.v. that equals $\sqrt{\varepsilon}$ w.p. $1/2$ or $-\sqrt{\varepsilon}$ w.p. $1/2$, and define $U=X+Y$. Then for any $M<L/C_1$ and $\varepsilon\leq \frac{12}{L^2+11}$, the rate distortion bound of Corollary~\ref{cor:unif} is $D_2^n(U)\geq \Omega(\eps)$,
 while the anti concentration bound gives $$D_M^n(U)\geq \frac{12(1-\varepsilon)}{L^2-1}\left(1 - \frac{M C_1}{L}\right).$$
However, the improved rate distortion bound of Theorem~\ref{thm:hz_bound} gives $$D_M^n\geq \frac{1-\varepsilon}{L^2-1}\left(\frac{12}{2\pi e}\cdot\frac{L^2}{M^2} -1\right).$$
\end{example}
Here we construct a uniform distribution over $2L$ discrete symbols with small minimum distance, making the anti-concentration bound tighter than that of rate distortion. We then proceed to find a random variable $A$ such that given $A$, the minimum distance is significantly increased, making the rate distortion bound tighter than anti-concentration.  Note that $U$ is a zero mean unit variance random variable with $d_{\min}=2\sqrt{\varepsilon}$ for $\varepsilon \leq \frac{12}{L^2+11}$. The bound of Corollary~\ref{cor:unif} is
\begin{align}
   D_M^n(U) \geq \frac{d_{\min}^2}{12}
\left(\frac{12}{2\pi e}\cdot\frac{2^{2\log 2L}}{ M^2}-1\right)=\varepsilon \left(\frac{8}{\pi e}\cdot\frac{L^2}{ M^2}-1\right),
\end{align}
which is  $O(\varepsilon)$ for constant $L$ and $M$. The anti-concentration bound however gives
\begin{align}
    d_1(U)&=\max_t t^2(1 - M C_1 \cdot Q(U,t)) \\&\geq d^2(1 - M C_1 \cdot Q(U,d))\\ &\geq \frac{12(1-\varepsilon)}{L^2-1}(1 - M C_1 /L).\label{eq:compare_to_anti}
\end{align}
by setting $t=d$ and noting that $Q(U,d)=1/L$. We can obtain an improvement to rate distortion  bound by appealing to Theorem~\ref{thm:hz_bound}. Specifically, let $A=Y$. Then $A$ is dependent on $U$ and $d_{\min}({P_{A|U}})=d$ as $U|_{A=a}$ is simply a $L$-PAM constellation with minimum distance $d$ shifted by $a$. Now let $Z$ be a uniform random variable supported on $$(-d_{\min}({P_{A|U}})/2,d_{\min}({P_{A|U}})/2)=(-d,d).$$ Note that $U=X+Y$ is a deterministic function of $(U+Z,A)=(X+Y+Z,Y)$ as $X$ is recoverable from $X+Z$. We thus have
\begin{align}
D_2^n(U) &\geq \sigma^2\left(\frac{2^{2(H(U)-I(U;A))}}{M^2}\cdot\frac{\mathsf{N}(Z)}{\sigma^2}-1\right)\\&=\frac{d^2}{12}\left(\frac{12}{2\pi e}\cdot\frac{2^{2(\log 2L-1)}}{M^2} -1\right)\\&=\frac{1-\varepsilon}{L^2-1}\left(\frac{12}{2\pi e}\cdot\frac{L^2}{M^2} -1\right),
\end{align} 
which is greater than eq.~\eqref{eq:compare_to_anti} for $M<L/C_1$.

%\newpage
%%%%%%%%%%%%%%%%%%%%%%%%%%%%%%%%%%%
%         APPLICATIONS            %
%%%%%%%%%%%%%%%%%%%%%%%%%%%%%%%%%%%
\section{Applications}\label{sec:apps}

% % %%%%%%%%%%%%%%%%%%%%%%%%%%%%%%%%%%%
% % %     IMPROVED LOWER BOUND        %
% % %%%%%%%%%%%%%%%%%%%%%%%%%%%%%%%%%%%
\subsection{Improved lower bound on the probability of disagreement}
To lower bound the probability of disagreement between boolean functions of $X$ and $Y$, we can use Corollary~\ref{cor:rhoasmmse} to improve~\eqref{eq:Witsen3} (and therefore improve~\cite[Theorem 2]{Witsenhausen_Main}):
\begin{align*}
    \Pr(f(X^n)  \neq  g(Y^n)) \geq 2\sqrt{p(1-p)q(1-q)}\cdot \left(1-\Gfunc \left(\sqrt{1-D_2^n(f_2(X))},\sqrt{1-D_2^n(g_2(Y))}\right)\right).
\end{align*}
Appealing to any of the lower bounds on $D_M^n(\cdot)$ derived in the previous section results in a dimension free bound.

%%%%%%%%%%%%%%%%%%%%%%%%%%%%%%%%%%%%%%%%%%%%%%%%%%%%%%%%%%%%%%%%%%%
% VIII-B Quadratic bound
%%%%%%%%%%%%%%%%%%%%%%%%%%%%%%%%%%%%%%%%%%%%%%%%%%%%%%%%%%%%%%%%%%%

\subsection{A quantization bound for quadratic forms}
We now focus our attention on reversible Markov chains as a preparation for our improved bound on the isoperimetric constant in the next subsection.  
Let $W$ be a reversible Markov kernel on $[K]$ with invariant distribution $\mu$ and let $(X,Y)\sim\mu\times W$. 
Consider the quadratic form 
$$\E[f(X)f(Y)]=\langle f,Wf\rangle_\mu,$$
which is governed by the eigenvalues of the reversible kernel rather than the singular values of its joint law. Since $W$ is reversible, the Markov operator $f\mapsto\E[f(Y)\mid X{=}\cdot]$ is self-adjoint on $L^2(\mu)$, so by the spectral theorem it has real eigenvalues $1=\lambda_1(W)\geq\cdots\geq\lambda_K(W)\geq-1$ and a $\mu$-orthonormal eigenbasis $\{\varphi_i\}$ with $\varphi_1=1$. Expanding any zero-mean, unit-norm $f$ as $f=\sum_{i\geq2}\langle f,\varphi_i\rangle\varphi_i$, with $\sum_{i\geq2}\langle f,\varphi_i\rangle^2=1$, yields
\begin{align}
    \E[f(X)f(Y)]=\langle f,Wf\rangle_\mu=\sum_{i\geq2}\lambda_i(W)\,\langle f,\varphi_i\rangle^2.\label{eq:quadform}
\end{align} 
The entries of the DTM matrix of $\mu\times W$ are $\frac{P(x,y)}{\sqrt{\mu(x)\mu(y)}}=\sqrt{\frac{\mu(x)}{\mu(y)}}W(y|x)$, thus it is similar to $W$ as we can write it as $SWS^{-1}$ with $S=\text{diag}(\mu)^{1/2}$, and therefore the two share the same eigenvalues, $\lambda_i(\text{DTM})=\lambda_i(W)$. Moreover, the reversibility of $W$ implies the symmetry of the DTM matrix, as
\begin{align}
    \mu(x)W(y|x)=\mu(y)W(x|y) \Longleftrightarrow \sqrt{\frac{\mu(x)}{\mu(y)}}W(y|x)=\sqrt{\frac{\mu(y)}{\mu(x)}}W(x|y).\label{eq:dtm-symmetric}
\end{align}

When the DTM matrix is positive semidefinite, its singular value
decomposition coincides with its eigendecomposition: $\sigma_i =
\lambda_i \geq 0$, the singular functions on both sides equal the
eigenfunctions, $f_i = g_i = \varphi_i$, and the machinery of
Section~\ref{subsec:spectralDecomp} applies to $\mu \times W$ with eigenvalues in place of
singular values and a single family of functions.
Since the DTM matrix is symmetric with spectrum $\{\lambda_i(W)\}$, it is
positive semidefinite precisely when all eigenvalues of $W$ are
nonnegative.

Now, to bound \eqref{eq:quadform}, recall the projection-quantization duality of
Section~\ref{sec:upper}: for any closed subspace $H \subseteq L^2(\mu^{\otimes n})$
with $H \perp \mathbf{1}$, the maximal projection and optimal MMSE of
Definitions~\ref{def:maxproj} and~\ref{def:MMSEopt} satisfy $\phi^n_M(H)^2 + D^n_M(H) = 1$ by Theorem~\ref{thm:ProjQuantDuality}.
The subspace we apply this to is the one carrying the top nontrivial eigenvalue.

\begin{definition}[Second eigenspace]\label{def:H2}
Let $\mathsf{K}$ be a reversible kernel with eigenvalues $1=\lambda_1(\mathsf{K})\geq\lambda_2(\mathsf{K})\geq\cdots$ and $\mu$-orthonormal eigenbasis $\{\varphi_i\}$. The second eigenspace is the $\lambda_2(\mathsf{K})$-eigenspace,
\begin{align}
    H_2\triangleq\mathrm{span}\{\varphi_i:\lambda_i(\mathsf{K})=\lambda_2(\mathsf{K})\}.
\end{align}
\end{definition}

The following theorem is the quadratic-form counterpart of
Corollary~\ref{cor:rhoasmmse}: it bounds the agreement of an $M$-level
function with itself in terms of the eigenspectrum and the quantization
MMSE $D^n_M(H_2)$ onto the second eigenspace. 
Note that the theorem holds for a general reversible kernel, whose DTM matrix need not be positive semidefinite. The proof proceeds by passing to the lazy kernel $\frac{I+\mathsf{K}}{2}$, which is positive semidefinite, applying the machinery of Section~\ref{sec:upper} to it,
and converting the resulting bound back to the original kernel.

\begin{theorem}\label{thm:quad}
Let $\mathsf{K}$ be a reversible kernel on $[K^n]$ with invariant law $\mu^{\otimes n}$, second eigenspace $H_2$ as in Definition~\ref{def:H2}, and let
$\lambda_*(\mathsf{K})\triangleq\max\{\lambda_i(\mathsf{K}):\lambda_i(\mathsf{K})<\lambda_2(\mathsf{K})\}$
be the largest eigenvalue below $\lambda_2(\mathsf{K})$, with the convention $\lambda_*(\mathsf{K})=\lambda_2(\mathsf{K})$ if no eigenvalue lies below it. Then for any $f\in\mathcal{F}^n_M(\mu^{\otimes n})$, with $(X^n,Y^n)\sim\mu^{\otimes n}\times\mathsf{K}$,
\begin{align}
    \E[f(X^n)f(Y^n)]\;\leq\;\lambda_2(\mathsf{K})-\big(\lambda_2(\mathsf{K})-\lambda_*(\mathsf{K})\big)\,D^n_M(H_2).\label{eq:quadbound}
\end{align}
\end{theorem}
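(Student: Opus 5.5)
The plan is to bound $\E[f(X^n)f(Y^n)]$ through its spectral expansion and then trade the projection onto $H_2$ for the quantization MMSE via Theorem~\ref{thm:ProjQuantDuality}. The first step will pass to the lazy kernel $\mathsf{L}=\frac{I+\mathsf{K}}{2}$, as the paper indicates, so that the machinery of Section~\ref{sec:upper} applies literally.

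First I will record the spectral facts about $\mathsf{L}$.
\begin{itemize}
\item $\mathsf{L}$ is reversible with respect to $\mu^{\otimes n}$.
\item It has the same $\mu^{\otimes n}$-orthonormal eigenbasis $\{\varphi_i\}$ as $\mathsf{K}$, with eigenvalues $\frac{1+\lambda_i(\mathsf{K})}{2}\geq 0$.
\item Hence its DTM matrix is positive semidefinite, so its canonical system is $\{\varphi_i,\varphi_i,\frac{1+\lambda_i}{2}\}$.
\item The map $\lambda\mapsto\frac{1+\lambda}{2}$ is increasing, so the second eigenspace of $\mathsf{L}$ is again $H_2$ and its next eigenvalue is $\frac{1+\lambda_*}{2}$.
\end{itemize}
Also, for $(X^n,\tilde Y^n)\sim\mu^{\otimes n}\times\mathsf{L}$ we have
\begin{align*}
\E_{\mathsf{L}}[f(X^n)f(\tilde Y^n)]=\tfrac12\|f\|^2+\tfrac12\E_{\mathsf{K}}[f(X^n)f(Y^n)]=\tfrac12+\tfrac12\E_{\mathsf{K}}[f(X^n)f(Y^n)].
\end{align*}

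Next, let $I$ be the index set of the eigenfunctions spanning $H_2$, taken among $i\geq2$. I will apply Lemma~\ref{Lemma:BasicFullPaper} to the canonical system of $\mu^{\otimes n}\times\mathsf{L}$ with $g=f$. This gives, with $x=\|f_{H_2}\|_2$,
\begin{align*}
\E_{\mathsf{L}}[ff]\leq \tfrac{1+\lambda_2}{2}x^2+\tfrac{1+\lambda_*}{2}(1-x^2).
\end{align*}
This is the diagonal case of the function $r$ with $(s,t)=(\frac{1+\lambda_2}{2},\frac{1+\lambda_*}{2})$. Here $s\geq t$, so the $I$-subsystem is dominant. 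The bound is nondecreasing in $x$, and Definition~\ref{def:maxproj} gives $x\leq\phi_M(H_2)$. By Theorem~\ref{thm:ProjQuantDuality}, $\phi_M(H_2)^2=1-D^n_M(H_2)$. Substituting, the right side is at most $\frac{1+\lambda_2}{2}-\frac{\lambda_2-\lambda_*}{2}D^n_M(H_2)$. Inserting the identity relating $\E_{\mathsf{L}}$ and $\E_{\mathsf{K}}$ and multiplying by $2$ then yields \eqref{eq:quadbound}.

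Equivalently, the argument can be done directly from \eqref{eq:quadform} without laziness. Split the sum into the $H_2$ part and its complement, bound the complement eigenvalues by $\lambda_*$, and then use the duality. The lazy route is just this computation shifted by $I$.

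The step needing most care is the admissibility of Theorem~\ref{thm:ProjQuantDuality}, which requires $H_2$ to be nonzero and orthogonal to constants.
\begin{itemize}
\item If $\lambda_2=1$ (a reducible chain), the literal span in Definition~\ref{def:H2} could include $\varphi_1=\mathbf 1$. I will take the span over $i\geq2$. This changes nothing for zero-mean $f$, since $\langle f,\mathbf 1\rangle=0$.
\item I will also check the convention $\lambda_*=\lambda_2$, when no eigenvalue lies below $\lambda_2$. There the complement is empty, $x=1$, and the bound reads $\leq\lambda_2$. This is consistent because the correction term vanishes.
\end{itemize}
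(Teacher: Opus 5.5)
Your proof is correct and is essentially the paper's argument: pass to the lazy kernel $\frac{I+\mathsf{K}}{2}$, split $f$ along $H_2$ and its complement, bound $\|f_{H_2}\|_2^2\leq 1-D^n_M(H_2)$ via Theorem~\ref{thm:ProjQuantDuality}, and undo the laziness. The only difference is cosmetic: you apply Lemma~\ref{Lemma:BasicFullPaper} directly with $g=f$ and use monotonicity of $sx^2+t(1-x^2)$, rather than routing through Theorem~\ref{thm:MainResultSubsystems} and the case analysis of Lemma~\ref{lem:simplify_func}; your remark that the same computation on \eqref{eq:quadform} makes the lazy step unnecessary is also right (in the $\lambda_2=1$ edge case, removing $\mathbf{1}$ from $H_2$ does change $D^n_M$, but only upward, so your bound still implies the stated one).
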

\begin{proof}
Let $\mathsf{K}' \triangleq \frac{I+\mathsf{K}}{2}$, and let $\tilde Y^n$
denote the output of $\mathsf{K}'$ on input $X^n$. For any unit-norm $f$,
\begin{align}
\E[f(X^n)f(\tilde Y^n)] = \langle f, \mathsf{K}' f\rangle
= \tfrac{1}{2}\bigl(1 + \E[f(X^n)f(Y^n)]\bigr), \label{eq:lazy-identity}
\end{align}
so it suffices to bound the left-hand side. $\mathsf{K}'$ is
reversible with invariant law $\mu^{\otimes n}$ and the same
eigenfunctions as $\mathsf{K}$, with eigenvalues $\lambda_i(\mathsf{K}')
= \frac{1+\lambda_i(\mathsf{K})}{2} \in [0,1]$; because $x \mapsto
\frac{1+x}{2}$ is increasing, the order of eigenvalues is preserved, so
$H_2(\mathsf{K}') = H_2(\mathsf{K})$ and $\lambda_*(\mathsf{K}') =
\frac{1+\lambda_*(\mathsf{K})}{2}$. As the eigenvalues of $\mathsf{K}'$
are nonnegative, its DTM matrix is positive semidefinite, so $\sigma_i =
\lambda_i(\mathsf{K}')$ in matching order and the matched subspaces of
the subsystem $I = \{i : \lambda_i(\mathsf{K}') =
\lambda_2(\mathsf{K}')\}$ coincide, $H_{P_X}(I) = H_{P_Y}(I) = H_2$. Since these
are the largest nontrivial singular values, $I$ is dominant.
Theorem~\ref{thm:MainResultSubsystems} applied with $f = g$, 
followed by Lemma~\ref{lem:simplify_func} with $(s,t) = (\lambda_2(\mathsf{K}'),\lambda_*(\mathsf{K}'))$ that is admissible as $0 \leq \lambda_*(\mathsf{K}') \leq \lambda_2(\mathsf{K}')$, 
and falling in the third case of \eqref{eq:SimplifyLemmaCases} since $x^* = y^* > a$ when $s > t$ and $a < 1$, with the boundary cases agreeing across branches, yields
\begin{align}
\E[f(X^n)f(\tilde Y^n)] \;\leq\; \lambda_2(\mathsf{K}') -
\bigl(\lambda_2(\mathsf{K}') - \lambda_*(\mathsf{K}')\bigr)
D^n_M(H_2).
\end{align}
Substituting into \eqref{eq:lazy-identity} and solving, using
$2\lambda_2(\mathsf{K}') - 1 = \lambda_2(\mathsf{K})$ and
$2\bigl(\lambda_2(\mathsf{K}') - \lambda_*(\mathsf{K}')\bigr) =
\lambda_2(\mathsf{K}) - \lambda_*(\mathsf{K})$, gives \eqref{eq:quadbound}.
\end{proof}

%%%%%%%%%%%%%%%%%%%%%%%%%%%%%%%%%%%%%%%%%%%%%%%%%%%%%%%%%%%%%%%%%%%
% VIII-C Isoperimetric bound
%%%%%%%%%%%%%%%%%%%%%%%%%%%%%%%%%%%%%%%%%%%%%%%%%%%%%%%%%%%%%%%%%%%
\subsection{An improved bound on the isoperimetric constant for cartesian product graphs and channels}
Let $W^{(n)}$ be the Cartesian product of $W$ over $[K^n]$~\cite{am85,houdre2004isoperimetric} defined in the following manner: A Markov kernel is chosen uniformly from $[n]$ and is incremented according to $W$, while keeping all other values fixed. Namely, for the input $x^n$, we have that the output $Y_J\sim W(\cdot | x_J)$ for $J\sim \mathrm{Unif}\{[n]\}$, and $Y_j = x_j $ for all $j\neq J$. It can be easily verified that the Cartesian product is also a reversible Markov kernel with a unique invariant distribution $\mu^n=\mu\otimes\cdots \otimes \mu$. Let $\{\lambda_k(L)\}_{k=1}^{K}$ be the eigenvalues of $L=I-W$ in increasing order, so $\lambda_1(L)=0$; equivalently, the eigenvalues of $W$ are $\lambda_i(W)=1-\lambda_i(L)$, in decreasing order.

For a reversible kernel $\mathsf{K}$ with invariant law $\pi$ and
$(X,Y) \sim \pi \times \mathsf{K}$, the isoperimetric (Cheeger) constant is
\begin{align}
h(\mathsf{K}) \;\triangleq\; \inf_{\substack{f : \mathcal{X} \to \{0,1\}}}
\frac{\Pr\bigl(f(X) \neq f(Y)\bigr)}{\min\{\Pr(f(X)=0), \Pr(f(X)=1)\}}.
\label{eq:hdef}
\end{align}

It is known, due to~\cite{am85,houdre2004isoperimetric,chung1998isoperimetric}, that
\begin{align}\label{eq:known_bound_on_h}
    h(W^{(n)}) \geq \frac{1}{n} \cdot\max \left\{\frac{h(W)}{2}, \lambda_2(L)\right\}.
\end{align}
We now establish Theorem~\ref{theorem:Isoperimetric}, which strengthens the spectral bound in~\eqref{eq:known_bound_on_h} in terms of the $\lambda_2$-eigenspace of $L$ and the \textit{second spectral gap} $\lambda_3(L)-\lambda_2(L)$.
\begin{theorem}[Isoperimetric inequality]\label{theorem:Isoperimetric}
Assume $\lambda_3(L)>\lambda_2(L)$. Then
\begin{align}
    h(W^{(n)})\geq\frac1n\Big(\lambda_2(L)+\min\{\lambda_2(L),\lambda_3(L)-\lambda_2(L)\}\cdot D^n_2(\varphi_2(X))\Big),\label{eq:iso}
\end{align}
where $\varphi_2$ is the $\mu$-normalized eigenfunction of $L$ associated with $\lambda_2(L)$, and $D^n_2$ is as in Definition~\ref{def:DnM} with $X^n\sim\mu^{\otimes n}$.
\end{theorem}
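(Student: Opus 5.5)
The plan is to reduce the isoperimetric ratio to the quadratic form bound of Theorem~\ref{thm:quad}, applied to the product kernel $\mathsf{K}=W^{(n)}$. For a Boolean $f:[K]^n\to\{0,1\}$ with $p=\Pr(f(X^n)=1)$, set $\tilde f=(f-p)/\sqrt{p(1-p)}\in\mathcal{F}^n_2(\mu^{\otimes n})$. Since $X^n$ and $Y^n$ have the same marginal $\mu^{\otimes n}$, a direct computation gives
\begin{align*}
\Pr(f(X^n)\neq f(Y^n)) = \E[(f(X^n)-f(Y^n))^2] = 2p(1-p)\bigl(1-\E[\tilde f(X^n)\tilde f(Y^n)]\bigr).
\end{align*}
Using $\min\{p,1-p\}\leq 2p(1-p)$, as in~\eqref{eq:h_bin_bound}, we get $h(W^{(n)})\geq 1-\sup_{\tilde f\in\mathcal{F}^n_2}\E[\tilde f(X^n)\tilde f(Y^n)]$. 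It then suffices to show that this supremum is at most $1-\frac1n(\lambda_2(L)+\min\{\lambda_2(L),\lambda_3(L)-\lambda_2(L)\}D^n_2(\varphi_2(X)))$.

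Next I would identify the spectrum of $W^{(n)}=\frac1n\sum_{j}W_j$, where $W_j$ acts by $W$ on coordinate $j$. The product functions $\varphi_u=\prod_i\varphi_{u_i}(x_i)$ are eigenfunctions with eigenvalue $1-\frac1n\sum_i\lambda_{u_i}(L)$. The top nontrivial eigenvalue is $1-\lambda_2(L)/n$, attained exactly by the $u$ with a single coordinate equal to $2$. This uses that $\lambda_2(L)$ is simple, which follows from $\lambda_3(L)>\lambda_2(L)$. Hence $H_2=\mathrm{span}\{\varphi_2(X_i)\}_{i=1}^n$, the space of unit-norm linear combinations of i.i.d.\ copies of $\varphi_2(X)$. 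By the remark after Definition~\ref{def:DnM}, $D^n_2(H_2)=D^n_2(\varphi_2(X))$.

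The next eigenvalue down is $\lambda_*=1-\frac1n\min\{\lambda_3(L),2\lambda_2(L)\}$: either one coordinate carries index $3$, or two coordinates carry index $2$; any other choice only increases $\sum_i\lambda_{u_i}(L)$. So
\begin{align*}
\lambda_2(W^{(n)})-\lambda_*=\frac1n\bigl(\min\{\lambda_3(L),2\lambda_2(L)\}-\lambda_2(L)\bigr)=\frac1n\min\{\lambda_3(L)-\lambda_2(L),\lambda_2(L)\}.
\end{align*}
Plugging this into \eqref{eq:quadbound} gives $\E[\tilde f\tilde f]\leq 1-\frac{\lambda_2(L)}{n}-\frac1n\min\{\cdot\}D^n_2(\varphi_2(X))$, and combining with the first step yields \eqref{eq:iso}.

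The main obstacle is the careful spectral bookkeeping in the second and third steps. One must confirm that $H_2$ is exactly the span of the coordinate copies of $\varphi_2$, so that the MMSE is the scalar quantity $D^n_2(\varphi_2(X))$. One must also get $\lambda_*$ right, including the degenerate case $n=1$, where the index $2\lambda_2$ is absent and $\lambda_*$ is instead determined by $\lambda_3(L)$ alone, so the bound still holds because the minimum only decreases the gain. Theorem~\ref{thm:quad} already handles the possible non-positive-semidefiniteness of $W^{(n)}$ via the lazy kernel, so no further argument is needed there.
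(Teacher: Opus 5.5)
Your proposal is correct and follows the paper's proof essentially step for step. You reduce $h(W^{(n)})$ to $1-\max_{f\in\mathcal{F}^n_2(\mu^{\otimes n})}\E[f(X^n)f(Y^n)]$, identify $H_2$ as the span of the coordinate copies of $\varphi_2$ and $\lambda_*(W^{(n)})$ from the product eigenstructure, and then plug into Theorem~\ref{thm:quad}. The differences are cosmetic: you derive the disagreement identity directly instead of citing~\eqref{eq:AgreementProbWitsBound}, and you work in the $L$-spectrum throughout. You also correctly note that for $n=1$ the mode $\mathds{1}+e_i+e_j$ does not exist, so the true gap is $\lambda_3(L)-\lambda_2(L)$ and the stated bound with the minimum is merely weaker; the paper's formula for $\lambda_*(W^{(n)})$ passes over this point.
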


\begin{proof}
First, note that for a reversible kernel $\mathsf{K}$ with invariant law $\mu$, $(X,Y) \sim \mu \times \mathsf{K}$, and any $f : \mathcal{X} \to
\{0,1\}$ with $p \triangleq \Pr(f(X)=1) \in (0,1)$, we have:
\begin{align}
    \frac{\Pr\bigl(f(X) \neq f(Y)\bigr)}{\min\{p, 1-p\}}
    \geq 1 - \rho (f,f)
    \geq 1 - \max_{g\in\mathcal{F}_2(\mu)}\E[f(X)f(Y)],
    \label{eq:isoproof_A_1}
\end{align}
where the first inequality is by~\eqref{eq:AgreementProbWitsBound} with $q=p$ together with $\min\{p, 1-p\} \leq 2p(1-p)$,
and the second holds since, by linear invariance of the Pearson correlation, $\rho(f,f) = \E[\bar f(X)\bar f(Y)]$ for the standardized
$\bar f \triangleq (f-p)/\sqrt{p(1-p)} \in \mathcal{F}_2(\mu)$.
In particular, we have:
\begin{align}
    h(W^{(n)})\geq1-\max_{f\in\mathcal{F}^n_2(\mu^{\otimes n})}\E[f(X^n)f(Y^n)],\qquad (X^n,Y^n)\sim\mu^{\otimes n}\times W^{(n)},\label{eq:isoproof_A}
\end{align}
by minimizing \eqref{eq:isoproof_A_1} with $\mathsf{K} = W^{(n)}$.

To bound this maximum we invoke Theorem~\ref{thm:quad} with $\mathsf{K}=W^{(n)}$, whose eigenstructure is well known~\cite[Lemma 12.11]{PeresLevin2017markov}: its eigenfunctions are the products $\varphi_u(X^n)=\prod_{i=1}^n\varphi_{u_i}(X_i)$, indexed as in Section~\ref{subsec:productspace}, but with eigenvalues that are the coordinate averages $\lambda_u(W^{(n)})=\frac1n\sum_{i=1}^n\lambda_{u_i}(W)$ rather than products.
The positive gap makes $\lambda_2(W)$ simple, so the second eigenspace is spanned by its $n$ coordinate copies and by no other modes,
\begin{align}
    H_2=\mathrm{span}\{\varphi_{\mathds{1}+e_i}\}_{i=1}^n,\qquad \lambda_2(W^{(n)})=\tfrac1n\big(\lambda_2(W)+(n-1)\big).
\end{align}
This structure makes the distortion computable: 
since $H_2$ is spanned by the functions $\varphi_{\mathds{1}+e_i}(X^n)=\varphi_2(X_i)$, i.e., by i.i.d.\ coordinate copies of the scalar source $\varphi_2(X)$,
the distortion onto $H_2$ coincides with the scalar MMSE of quantizing a normalized combination of those copies, $D^n_2(H_2)=D^n_2(\varphi_2(X))$, which is the quantity bounded in Sections~\ref{sec:ratedist} and~\ref{sec:anticont}.\footnote{A situation where $D^n_2(H_2)\neq D^n_2(\varphi_2(X))$ requires $\lambda_2(W)$ to have multiplicity greater than one, which happens iff $\lambda_2(W)=\lambda_3(W)$. This is the zero-gap boundary $\lambda_2(L)=\lambda_3(L)$ excluded from the Theorem's premise: In such case, the eigenspace $H_2$ is spanned by coordinate copies of several distinct eigenfunctions rather than of $\varphi_2$ alone, so $D^n_2(H_2)$ is no longer the scalar MMSE of a single source and the reduction fails. One way or the other, failing to either capture all of $H_2$ or to bound its distortion away from zero results in the coefficient $\min\{\lambda_2(L),\lambda_3(L)-\lambda_2(L)\}\cdot D^n_2$ in~\eqref{eq:iso} vanishing, and the bound collapses to the prior art~\eqref{eq:known_bound_on_h}.} The largest eigenvalue below $\lambda_2(W^{(n)})$ is attained at $u=\mathds{1} +2e_i$ or $u=\mathds{1}+e_i+e_j,\ i\neq j$, giving
\begin{align}
    \lambda_*(W^{(n)})=\tfrac1n\big(\max\{\lambda_3(W),\,2\lambda_2(W)-1\}+(n-1)\big).
\end{align}
Substituting these into~\eqref{eq:quadbound} with $M=2$,
\begin{align}
    \max_{f}\E[f(X^n)f(Y^n)]\leq\tfrac1n\big(\lambda_2(W)+(n-1)\big)-\tfrac1n\big(\lambda_2(W)-\max\{\lambda_3(W),2\lambda_2(W)-1\}\big)D^n_2(\varphi_2(X)).
\end{align}
Plugging this into~\eqref{eq:isoproof_A} and rearranging, we get
\begin{align}
    h(W^{(n)})\geq\tfrac1n(1-\lambda_2(W))+\tfrac1n\big(\lambda_2(W)-\max\{\lambda_3(W),2\lambda_2(W)-1\}\big)D^n_2(\varphi_2(X)).\label{eq:iso-in-W}
\end{align}
It remains to convert~\eqref{eq:iso-in-W} to the $L$-spectrum via $\lambda_i(W)=1-\lambda_i(L)$. The leading term is $1-\lambda_2(W)=\lambda_2(L)$. For the coefficient, $\max\{\lambda_3(W),2\lambda_2(W)-1\}=1-\min\{\lambda_3(L),2\lambda_2(L)\}$, so
\begin{align}
    \lambda_2(W)-\max\{\lambda_3(W),2\lambda_2(W)-1\}=\min\{\lambda_3(L),2\lambda_2(L)\}-\lambda_2(L)=\min\{\lambda_3(L)-\lambda_2(L),\lambda_2(L)\},
\end{align}
which is exactly inequality~\eqref{eq:iso}.
\end{proof}

%%%%%%%%%%%%%%%%%%%%%%%%%%%%%%%%%%%%%%%%%%%%%%%%%%%%%%%%%%%%%%%%%%%
% ISO Example 7
%%%%%%%%%%%%%%%%%%%%%%%%%%%%%%%%%%%%%%%%%%%%%%%%%%%%%%%%%%%%%%%%%%%

\begin{example}
Let the kernel $W$ be given by
\begin{align}
  W(y|x) = \frac{(x+y \bmod 3)+1}{6}, \qquad x,y\in\{0,1,2\}.
\end{align}
Since $W(y\mid x)=W(x\mid y)$, the kernel is symmetric and hence doubly stochastic, so its invariant distribution $\mu$ is uniform and detailed balance
holds, making $W$ reversible.

We first compute $h(W)$. Any binary $f:\{0,1,2\}\to\{0,1\}$ conflates exactly two
states, so under the uniform $\mu$ we have
$\min\{\Pr(f(X)=0),\Pr(f(X)=1)\}=\tfrac13$. As $W(y\mid x)$ is largest when
$x+y=2$, the disagreement probability is minimized by the assignment
$\{0,2\}\mapsto 1,\ 1\mapsto 0$, giving
\begin{align}
    h(W)=3\Pr(f(X)\neq f(Y))=W(1|0)+W(\{0,2\}|1)+W(1|2)=1.
\end{align}
The eigendecomposition of $L$ gives $\lambda_2(L)\approx 0.7113$ and
$\lambda_3(L)\approx 1.289$, so $\min\{\lambda_2(L),\lambda_3(L)-\lambda_2(L)\}
=\lambda_3(L)-\lambda_2(L)\approx 0.577$, with associated $\mu$-normalized
eigenfunction
\begin{align}
    f_2(x) =
  \begin{cases}
    \frac{-1+\sqrt3}{2}, & x=0,\\[3pt]
    \frac{-1-\sqrt3}{2}, & x=1,\\[3pt]
    1, & x=2.
  \end{cases}
\end{align}
The classical bound~\eqref{eq:known_bound_on_h} gives $h(W)\ge\lambda_2(L)\approx 0.7113$. Our scalar bound, Theorem \ref{theorem:Isoperimetric} at $n=1$, reads
\begin{align}
  h(W) \ge \lambda_2(L) + \min\{\lambda_2(L),\lambda_3(L)-\lambda_2(L)\}\cdot D_2(f_2(X)).
\end{align}

To compute $D_2(f_2(X))$, recall that the MSE-optimal two-level quantizer of a scalar source is a threshold function. 
Since $f_2(1) < f_2(0) < f_2(2)$, the only two thresholds isolate either the smallest value, $\{1\}\mid\{0,2\}$, or the largest, $\{0,1\}\mid\{2\}$.
The former is MSE-optimal, with conditional-mean representatives $\bigl(-\tfrac{1+\sqrt3}{2},\,\tfrac{1+\sqrt3}{4}\bigr)$ and
distortion
\begin{align}
  D_2(f_2(X)) = \tfrac{2}{3}\Bigl(\tfrac{3-\sqrt{3}}{4}\Bigr)^2 \approx 0.067,
\end{align}
against a distortion of $0.5$ for the latter. 
The scalar bound therefore gives
$h(W)\ge \lambda_2(L) + 0.577\cdot 0.067 = \lambda_2(L) + 0.0387$. Note that $\lambda_2(L)$ already exceeds the bound provided by~\cite{erbar2018poincare}, namely $\frac{1}{3}\sqrt{\lambda_2(L)\cdot\min_{x,y}W(y|x)}$, and so clearly does our bound.

For $n>1$ we appeal to Corollary~\ref{cor:unif} with $U=f_2(X)$. Since $H(f_2(X))=\log 3$ and
$d_{\min}\approx 0.634$,
\begin{align}
  D_2^n(f_2(X)) \ge \frac{d_{\min}^2}{12}
    \Bigl(\frac{12}{2\pi e}\cdot\frac{2^{2H(f_2(X))}}{2^2}-1\Bigr) \approx 0.019,
\end{align}
which yields
\begin{align}
  h(W^{(n)}) \ge \frac1n\Bigl(\lambda_2(L) + \min\{\lambda_2(L),\lambda_3(L)-\lambda_2(L)\}
    \cdot D_2^n(f_2(X))\Bigr) = \frac1n\bigl(\lambda_2(L) + 0.011\bigr),
\end{align}
strictly improving, for every $n$, the classical bound~\eqref{eq:known_bound_on_h},
which for this chain equals $\tfrac{1}{n}\lambda_2(L)$ (as $\lambda_2(L) > h(W)/2$).
\end{example}

\section{Acknowledgments}  
This work was supported by the ISF under Grants 1791/17, 1495/18, and 1766/22. The work of DD was further supported by the Yitzhak and Chaya Weinstein Research Institute for Signal Processing.

\bibliographystyle{IEEEtran}
{\footnotesize
\bibliography{references}}

\begin{appendices}
\section{Proof of Lemma~\ref{lem:simplify_func}}
We first find the extremum of $g(x,y)=x y  s  + \sqrt{(1-x^2)(1-y^2)} t$. We have
\begin{align}
    \frac{\partial g}{\partial x}&=y s +\frac{\sqrt{1-y^2}}{2\sqrt{1-x^2}}\cdot (-2xt),\\\frac{\partial g}{\partial y}&=x s +\frac{\sqrt{1-x^2}}{2\sqrt{1-y^2}}\cdot (-2yt).
\end{align}
The only extremum point of $g$ is $(0,0)$, which is clearly a minimum. Thus, the maximum is attained on the boundaries, either on $x=0$, $x=a$, $y=0$ or $y=b$.
We have $\max_{0\leq y\leq b}g(0,y)=\max_{0\leq y\leq b}\sqrt{1-y^2}t=t$ and, similarly, $\max_{0\leq x\leq a}g(x,0)=\max_{0\leq x\leq a}\sqrt{1-x^2}t=t$. These values never exceed the maximum over the remaining boundaries, since, letting $c=\min\{a,b\}$, we can set $x=y=c$ to get
\begin{align}
   g(c,c)=c^2  s +(1-c^2)t\geq c^2 t+(1-c^2)t=t,
\end{align}
as $ s \geq t$. Thus, it is sufficient to look for the maximum of the simple functions $g(y)=g(a,y)$ and $g(x)=g(x,b)$:
\begin{align}
   \frac{dg(x)}{dx}= b s -t\sqrt{1-b^2}\frac{x}{\sqrt{1-x^2}}.
\end{align}
Equating to zero, we have 
\begin{align}
    \frac{x^2}{1-x^2}=\frac{(b s )^2}{t^2(1-b^2)},
\end{align}
implying that 
\begin{align}
   x^*=\frac{b s }{\sqrt{b^2 s ^2+(1-b^2)t^2}} .
\end{align}
Similarly, optimizing over $g(y)$, we obtain
\begin{align}
  y^*=\frac{a s }{\sqrt{a^2 s ^2+(1-a^2)t^2}} . 
\end{align}
The maximum is thus either 
\begin{align}
  g(a,y^*)&= a y^*  s  + \sqrt{(1-a^2)(1-(y^*)^2)} t\\&=\frac{1}{\sqrt{a^2 s ^2+(1-a^2)t^2}}\cdot (a^2 s ^2+(1-a^2) t^2\\&=\sqrt{a^2 s ^2+(1-a^2)t^2},
\end{align}
or, similarly, $g(x^*,b)=\sqrt{b^2 s ^2+(1-b^2)t^2}$. Finally, note that $b > a$ implies $g(x^*,b) \geq g(a,y^*)$ and that $b\leq a$ implies $g(x^*,b)\leq g(a,y^*)$. Thus if $\{b> a,x^*\leq a,y^*\leq b\}$ or $\{x^*\leq a,y^*> b\}$ we have $d(a,b)=\sqrt{b^2 s ^2+(1-b^2)t^2}$. Otherwise, if $\{b\leq a,x^*\leq a,y^*\leq b\}$ or $\{x^*> a,y^*\leq b\}$ we have $d(a,b)=\sqrt{a^2 s ^2+(1-a^2)t^2}$. These conditions do not hold only if the maximum is attained at some point outside the rectangle $\{0\leq x\leq a,0\leq y\leq b\}$. In this case, due to the monotonicity of $g(a,y)$ (or $g(x,b)$), the maximum is attained at the edge point $(x,y)=(a,b)$, that is, $d(a,b)=ab s + \sqrt{(1-a^2)(1-b^2)} t$ .
%%%%%%%%%%%%%%%%%%%%%%%%%%%%%%%%%%%%%
%%%%%%%%%%%%%%%%%%%%%%%%%%%%%%%%%%%%%
\section{Proof of Lemma~\ref{lemma:DIMINDEP_A}}\label{app:proof_lemma5}
Note that $E(\alpha U, t) = E(U,t/\alpha)$, and that if $Z = \alpha U$, then the symmetrized variable $Z^*$ corresponding to $Z$ has the same distribution as $\alpha U^*$ (or $-\alpha U^*$). 
Now, let $a \in \mathbb{R}^n$ with $\|a\|_2=1$. By Lemma~\ref{lem:Petrov}
\begin{align*}
    &Q\left(\sum a_i U_i ,t \right) \leq  \frac{Ct}{\sqrt{\sum_{i=1}^n t_i^2 E(\left(a_i U_i\right)^*;t_i)}} \\
    &= \frac{Ct}{\sqrt{\sum_{i=1}^n t_i^2 E(|a_i| \left(U_i\right)^*;t_i)}} = \frac{Ct}{\sqrt{\sum_{i=1}^n t_i^2 E(U^*_i;\frac{t_i}{|a_i|})}},
\end{align*}
Without losing generality, assume that $|a_1| \geq |a_i|, \  1 \leq i \leq n$.
Pick $t_i = t |a_i|/c$ for some $c > |a_1|$, then
\begin{align}
\nonumber    Q\left(\sum a_i U_i ,t \right) &\leq \frac{Ct}{\sqrt{\sum_{i=1}^n (t |a_i|/c)^2 E(U^*;\frac{t}{c})}}\\
    & = C\cdot c / \sqrt{E(U^*;t/c)}.\label{eq:PetrovStillApplicable}
\end{align}
Next, let $t_\tau = c \cdot \tau$ for $c = C^{-1}Q(U,\tau) \sqrt{E(U^*;\tau)}$ and some $\tau > 0$. 
Note that this choice does not necessarily satisfy $c > |a_1|$, which is a necessary condition in Lemma \ref{lem:Petrov}. 
But if $c > |a_1|$ nevertheless, then Lemma \ref{lem:Petrov} and \eqref{eq:PetrovStillApplicable} give
\begin{align}
    Q\left(\sum a_i U_i ,t_\tau \right)  \leq Q(U,\tau).
 \label{eq:TwoBoundsSameResult}
\end{align}
%%%%%%%%%%%%%%%%%%%%%%%%%%%%%%%%%%%
%         1-Column-Bound.         %
%%%%%%%%%%%%%%%%%%%%%%%%%%%%%%%%%%%
If however $c \leq |a_1|$, a different argument applies. Consider the concentration of the random variable $|a_1| U_1$. A consequence of inequality \eqref{eq:lemmaMinimumConcentration} is that $Q\left(\sum a_i U_i ,t \right) \leq Q(|a_1| U,t)$. Furthermore, by our assumption that $c \leq |a_1|$ we have
\begin{align*}
    Q\left(\sum a_i U_i ,t \right) &\leq Q(|a_1| U,t)  =  Q\left(U,\frac{t}{|a_1|}\right) \leq Q\left(U,\frac{t}{c}\right),
\end{align*}
and thus, under the same choice of $t_\tau = c \cdot \tau$, we get~\eqref{eq:TwoBoundsSameResult} again.
We therefore conclude that~\eqref{eq:TwoBoundsSameResult} holds whether or not $c \geq |a_1|$, where $t_\tau = Q(U,\tau) \sqrt{E(U^*;\tau)}C^{-1}\tau.$ This holds for any $a\in \mathbb{R}^n$ with $\|a\|_2=1$, hence $A_n\left(U,t_\tau \right) \leq Q(U,\tau)$. 

\end{appendices}
\end{document}